\documentclass[a4paper,11pt]{article}
\usepackage{graphicx,epsfig,picins}
\usepackage{fancyhdr,fancybox}
\usepackage{indentfirst}
\usepackage{titlesec}
\usepackage{verbatim}
\usepackage[sort&compress, numbers]{natbib}
\usepackage{array,dcolumn,tabularx}
\usepackage{setspace}
\usepackage{geometry}
\usepackage{extarrows,chemarrow,xypic} 
\usepackage[small]{caption2}
\usepackage{sectsty}
\usepackage{color}

\usepackage{times}     

\usepackage{latexsym}
\usepackage{amsmath}   
\usepackage{amssymb}   
\usepackage{amsbsy}
\usepackage{amsthm}
\usepackage{amsfonts}
\usepackage{mathrsfs}  
\usepackage{bm}        
\usepackage{relsize}   

\usepackage{hyperref}  


\newcommand{\paperfont}{\fontsize{12pt}{1.3\baselineskip}\selectfont}
\sectionfont{\fontsize{13}{15}\selectfont}
\geometry{top=1in,bottom=1in,left=1in,right=1in}
\parindent 4ex

\begin{document}


\theoremstyle{definition}
\makeatletter
\thm@headfont{\bf}
\makeatother
\newtheorem{theorem}{Theorem}[section]
\newtheorem{definition}[theorem]{Definition}
\newtheorem{lemma}[theorem]{Lemma}
\newtheorem{proposition}[theorem]{Proposition}
\newtheorem{corollary}[theorem]{Corollary}
\newtheorem{remark}[theorem]{Remark}
\newtheorem{example}[theorem]{Example}
\newtheorem{notation}[theorem]{Notation}

\lhead{}
\rhead{}
\lfoot{}
\rfoot{}

\renewcommand{\refname}{References}
\renewcommand{\figurename}{Figure}
\renewcommand{\tablename}{Table}
\renewcommand{\proofname}{Proof}

\title{\textbf{Novel dissipative properties for the master equation}}
\author{Liu Hong$^1$,\;\;Chen Jia$^2$,\;\;Yi Zhu$^1$,\;\;Wen-An Yong$^{1,*}$ \\
\footnotesize $^1$Zhou Pei-Yuan Center for Applied Mathematics, Tsinghua University, Beijing 100084, P.R. China \\
\footnotesize $^2$Beijing Computational Science Research Center, Beijing 100094, P.R. China \\
\footnotesize $^*$Correspondence: wayong@mail.tsinghua.edu.cn \\}
\date{}                              
\maketitle                           
\thispagestyle{empty}                

\paperfont

\begin{abstract}
Recent studies have shown that the entropy production rate for the master equation consists of two nonnegative terms: the adiabatic and non-adiabatic parts, where the non-adiabatic part is also known as the dissipation rate of a Boltzmann-Shannon relative entropy. In this paper, we provide some nonzero lower bounds for the relative entropy, the entropy production rate, and its adiabatic and non-adiabatic parts. These nonzero lower bounds not only reveal some novel dissipative properties for general nonequilibrium processes which are much stronger than the second law of thermodynamics, but also impose some new constraints on thermodynamic constitutive relations. Moreover, we also provide a mathematical application of these nonzero lower bounds by studying the long-time behavior of the master equation. Extensions to the Tsallis statistics are also discussed, including the nonzero lower bounds for the Tsallis-type relative entropy and its dissipation rate. \\

\noindent 
\textbf{Keywords}: nonequilibrium, entropy production rate, free energy, Markov process, Tsallis statistics \\
\end{abstract}

\section{Introduction}
The master equation describes the dynamics of the probability distribution for a Markov jump process and has been widely applied to physics, chemistry, biology, and many other scientific fields \cite{reichl1980modern}. It provides an effective way to model various stochastic systems such as random walks, birth-death processes, Lindblad equations \cite{van1992stochastic}, chemical reaction systems \cite{kurtz1972relationship}, single-molecule enzyme kinetics \cite{ge2012stochastic}, and so on.

In recent years, significant progresses have been made in the field of mesoscopic stochastic thermodynamics \cite{jarzynski2011equalities, seifert2012stochastic, van2015ensemble}. The dynamic foundation of this field turns out to be Markov processes and the master equation plays a fundamental role because any Markov process can generally be approximated by a Markov jump process. In the theory of stochastic thermodynamics, an equilibrium state is defined as a stationary process with detailed balance and the deviation of a system from equilibrium can be  characterized by the concept of entropy production rate \cite{jiang2004mathematical, kjelstrup2008non, zhang2012stochastic}. Motivated by the classical theory of nonequilibrium thermodynamics \cite{de1962non}, the entropy production rate can generally be represented as a bilinear form of the thermodynamics fluxes and forces.
When an open system is driven by a sustained energy supply from the environment, it can approach a nonequilibrium steady state (NESS) as its long-time behavior, with a positive entropy production rate.

It has been shown recently that the entropy production rate $e_p$ for the master equation can be decomposed as the sum of two nonnegative terms: $e_p = e_p^{(ad)}+e_p^{(na)}$, where $e_p^{(ad)}$ (resp. $e_p^{(na)}$) is the adiabatic (resp. non-adiabatic) entropy production rate \cite{ge2010physical, esposito2010three}. The adiabatic part is also known as housekeeping heat \cite{oono1998steady, hatano2001steady} and the non-adiabatic part is also referred to as the dissipation rate $-dF/dt$ of the free energy $F$ \cite{ge2010physical}. Moreover, the free energy $F$ can be represented as a Boltzmann-Shannon relative entropy between the probability distribution $p = (p_i)$ of the system and the steady-state distribution $\mu = (\mu_i)$:
\begin{equation*}
F = k_BT\sum_ip_i\log\frac{p_i}{\mu_i},
\end{equation*}
where $k_B$ is the Boltzmann constant and $T$ is the temperature. The above decomposition is important because it provides a strengthened version of the second law of thermodynamics.

In this work, we shall further explore this strengthened version of the second law. Inspired by the entropy-dissipation principle in nonequilibrium thermodynamics proposed in \cite{yong2004entropy}, we provide nonzero lower bounds for the free energy, the entropy production rate, and its adiabatic and non-adiabatic parts. These nonzero lower bounds reveal new dissipative properties of general nonequilibrium processes which are even much stronger than the strengthened version of the second law. They also reinforce the previous findings that the irreversibility of nonequilibrium processes has two different mechanisms: the deviation from steady state and the breaking of detailed balance \cite{ge2010physical, esposito2010three}.
Moreover, we establish similar conclusions for the Tsallis statistics by providing nonzero lower bounds for the Tsallis-type relative entropy and its dissipation rate. Finally, we elaborate the significance of these nonzero lower bounds from both the physical and mathematical perspectives. It will be seen that these nonzero lower bounds not only impose some new restrictions on thermodynamic constitutive relations, but also provide a simple way to study the long-time behavior of the master equation.

This paper is organized as follows. In Section 2, we introduce some preliminaries on the master equation. Section 3 contains our main results. In Section 4, we present some discussions on the physical significance of the nonzero lower bounds. Section 5 is devoted to a new and purely analytic proof about the long-time behavior of the master equation.

\section{Preliminaries}
Consider a molecular system modeled by a Markov jump process \cite{norris1998markov} with a finite number of states $1,2,\cdots,N$ and transition rate matrix $Q = (q_{ij})$, where $q_{ij}$ with $i\neq j$ denotes the transition rate from state $i$ to $j$ and $q_{ii} = -\sum_{j\neq i}q_{ij}$. Let $p(t) = (p_1(t),p_2(t),\cdots,p_N(t))$ denote the probability distribution of the system at time $t$. Then the dynamics of the probability distribution $p = p(t)$ is governed by the master equation \cite{reichl1980modern}
\begin{equation*}
\frac{dp}{dt} = pQ.
\end{equation*}
In components, the master equation can be written as
\begin{equation}\label{master}
\frac{dp_i}{dt} = \sum_jp_jq_{ji} = \sum_{j\neq i}(p_jq_{ji}-p_iq_{ij}),\;\;\;i = 1,2,\cdots,N.
\end{equation}

We assume that the system is irreducible, namely, for each pair of states $i\neq j$, there is a sequence of states $i_1,i_2,\cdots,i_m$, such that $i_1=i$, $i_m=j$, and
\begin{equation*}
q_{i_1i_2}q_{i_2i_3}\cdots q_{i_{m-1}i_m}>0.
\end{equation*}
Under this irreducible condition, it is well-known that the components of $p(t)$ are all positive for any $t>0$. Furthermore, it can be proved that the system has a unique steady-state distribution $\mu = (\mu_1,\mu_2,\cdots,\mu_N)$ satisfying $\mu Q = 0$ and the components of $\mu$ are all positive \cite{norris1998markov}. In general, the steady state of the system can be classified into equilibrium and nonequilibrium ones. In an equilibrium state, the detailed balance condition $\mu_iq_{ij} = \mu_jq_{ji}$ holds for any pair of states $i$ and $j$. In an NESS, however, the detailed balance condition is broken and the system is externally driven with concomitant entropy production.

Throughout this paper, we set the Boltzmann constant $k_B = 1$ and assume the temperature $T = 1$ for simplicity. Recall that the total entropy production rate $e_p$ of the system is defined as
\begin{equation}\label{ep}
e_p = \frac{1}{2}\sum_{i,j}(p_iq_{ij}-p_jq_{ji})\log\frac{p_iq_{ij}}{p_jq_{ji}} ,
\end{equation}
where $0/0$ is understood to be $1$. In recent years, it has been shown that $e_p$ can be decomposed as the sum of two nonnegative terms: $e_p = e_p^{(ad)}+e_p^{(na)}$ \cite{ge2010physical, esposito2010three}. Here the adiabatic part $e_p^{(ad)}$, also known as housekeeping heat, can be expressed as
\begin{equation}\label{epad}
e_p^{(ad)} = \frac{1}{2}\sum_{i,j}(p_iq_{ij}-p_jq_{ji})\log\frac{\mu_iq_{ij}}{\mu_jq_{ji}} \geq 0
\end{equation}
and the non-adiabatic part $e_p^{(na)}$ can be expressed as
\begin{equation}\label{fd}
e_p^{(na)} = \frac{1}{2}\sum_{i,j}(p_iq_{ij}-p_jq_{ji})\log\frac{p_i\mu_j}{p_j\mu_i} \geq 0.
\end{equation}

To see the nature of $e_p^{(na)}$, we recall the following definition.
\begin{definition}
Let $u = (u_1,u_2,\cdots,u_N)$ and $v = (v_1,v_2,\cdots,v_N)$ be two probability distributions. Then the Boltzmann-Shannon relative entropy (Kullback-Leibler divergence) between $u$ and $v$ is defined as
\begin{equation*}
D(u\parallel v) = \sum_iu_i\log\frac{u_i}{v_i}.
\end{equation*}
\end{definition}

It was shown in \cite{ge2010physical} that, even if the system is away from equilibrium, the free energy $F$ can still be introduced and can be expressed as the Boltzmann-Shannon relative entropy between the probability distribution $p$ and the steady-state distribution $\mu$:
\begin{equation}\label{F}
F = D(p\parallel\mu) = \sum_ip_i\log\frac{p_i}{\mu_i}.
\end{equation}
It is easy to check that the non-adiabatic entropy production rate $e_p^{(na)}$ is exactly the dissipation rate $f_d = -dF/dt$ for the free energy $F$:
\begin{equation}\label{fd}
e_p^{(na)} = f_d = -\frac{dF}{dt} \geq 0.
\end{equation}

The entropy production rate $e_p$ is always nonnegative and this is an equivalent statement of the second law of thermodynamics. However, the above discussion shows that $e_p$ can be further decomposed as the sum of two nonnegative terms, $e_p^{(ad)}$ and $e_p^{(na)}$. Therefore, this decomposition can be viewed as a strengthened version of the second law. In what follows, we shall further explore this strengthened version of the second law by providing some nonzero lower bounds for the free energy, the total entropy production rate, and its adiabatic and non-adiabatic parts.

\section{Results}

\subsection{Nonzero lower bound for the non-adiabatic entropy production rate}\label{free}
We start with the following inequality, which will be used frequently in the sequel.
\begin{lemma}\label{basic}
Let $K$ be a positive number. Then for any $0\leq x,y\leq K$, we have
\begin{equation*}
y\log\frac{y}{x} \geq \frac{1}{2K}(y-x)^2+(y-x).
\end{equation*}
\end{lemma}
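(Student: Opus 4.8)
The plan is to treat the inequality as a one-dimensional statement in $y$ with $x$ and $K$ held fixed, exploiting the fact that the first two derivatives of a suitable auxiliary function vanish at $y = x$ while its second derivative stays nonnegative on the admissible range. Assuming first that $x,y \in (0,K]$, I would introduce
\begin{equation*}
g(y) = y\log\frac{y}{x} - \frac{1}{2K}(y-x)^2 - (y-x),
\end{equation*}
whose nonnegativity is exactly the claim. A direct computation gives $g(x) = 0$, together with $g'(y) = \log(y/x) - \frac{1}{K}(y-x)$, so that $g'(x) = 0$ as well, and $g''(y) = \frac{1}{y} - \frac{1}{K}$.

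The key observation is that $g''(y) \geq 0$ for every $y \in (0,K]$, since $y \leq K$ forces $1/y \geq 1/K$; this is precisely where the constraint $y \leq K$ and the coefficient $\frac{1}{2K}$ of the quadratic term enter in an essential way. With this in hand I would apply Taylor's theorem with Lagrange remainder about $y = x$: for some $\xi$ lying between $x$ and $y$,
\begin{equation*}
g(y) = g(x) + g'(x)(y-x) + \frac{1}{2}g''(\xi)(y-x)^2 = \frac{1}{2}g''(\xi)(y-x)^2.
\end{equation*}
Because both $x$ and $y$ lie in $(0,K]$, so does $\xi$, whence $g''(\xi) \geq 0$; this yields $g(y) \geq 0$ and settles the inequality whenever $x,y > 0$. (Equivalently, one could argue that convexity of $g$ on $(0,K]$ together with the critical point at $y=x$ makes $x$ the global minimizer, but the Taylor form is the most economical.)

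It then remains to dispose of the degenerate cases in which $x$ or $y$ equals $0$, using the stated conventions $0/0 = 1$ and $0\log 0 = 0$. If $y = 0$ and $x > 0$, the left-hand side is $0$ while the right-hand side equals $\frac{1}{2K}x^2 - x = x\!\left(\frac{x}{2K}-1\right) \leq 0$ since $x \leq K$; if $x = 0$, the left-hand side is either $0$ (when $y = 0$) or $+\infty$ (when $y > 0$), each of which dominates the finite right-hand side. I do not anticipate any genuine difficulty in the core convexity estimate; the only point requiring care is the bookkeeping of these boundary cases and checking that the adopted conventions make the limiting expressions behave as the bound demands.
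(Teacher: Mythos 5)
Your proposal is correct and is essentially the paper's own argument: both expand around $y=x$ via Taylor's theorem with Lagrange remainder and use $1/\xi \geq 1/K$ on $(0,K]$, the only cosmetic difference being that you fold the quadratic term into the auxiliary function $g$ while the paper works with $f(x)=x\log x - x$ directly. Your explicit treatment of the boundary cases $x=0$ or $y=0$ is a small bonus the paper omits.
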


\begin{proof}
For any $0<x\leq K$, set $f(x)=x\log x-x$. Then we have $f'(x)=\log x$ and $f''(x)=1/x\geq 1/K$. By the mean-value theorem, for any $0< x,y\leq K$, there exists $\theta$ between $x$ and $y$ such that
\begin{equation*}
f(y)-f(x) = f'(x)(y-x)+\frac{1}{2}f''(\theta)(y-x)^2.
\end{equation*}
Moreover, it can be seen that
\begin{equation*}
f(y)-f(x)-f'(x)(y-x) = y\log\frac{y}{x}-(y-x).
\end{equation*}
This shows that for any $0< x,y\leq K$,
\begin{equation*}
y\log\frac{y}{x}-(y-x) = \frac{1}{2}f''(\theta)(y-x)^2 \geq \frac{1}{2K}(y-x)^2,
\end{equation*}
which gives the desired result.
\end{proof}

It is a well-known result that the free energy $F$ is nonnegative \cite{ge2010physical}. The following theorem strengthens this result by giving a nonzero lower bound of the free energy $F$.
\begin{theorem}\label{lowerF}
The following inequality holds:
\begin{equation*}
F \geq \frac{1}{2}\sum_i(p_i-\mu_i)^2.
\end{equation*}
\end{theorem}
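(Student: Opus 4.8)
The plan is to apply Lemma~\ref{basic} termwise and then sum over the states, exploiting the fact that $p$ and $\mu$ are both probability distributions. Since their components all lie in $[0,1]$, the natural choice is $K=1$.

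Concretely, I would fix an index $i$ and apply Lemma~\ref{basic} with $y = p_i$, $x = \mu_i$, and $K = 1$. Because $0 \le p_i, \mu_i \le 1$, the hypotheses are satisfied, and the lemma yields
\begin{equation*}
p_i\log\frac{p_i}{\mu_i} \geq \frac{1}{2}(p_i-\mu_i)^2 + (p_i-\mu_i).
\end{equation*}
Summing this inequality over $i = 1,2,\dots,N$ and recalling the definition $F = \sum_i p_i\log(p_i/\mu_i)$ gives
\begin{equation*}
F \geq \frac{1}{2}\sum_i(p_i-\mu_i)^2 + \sum_i(p_i-\mu_i).
\end{equation*}

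The final step is to dispose of the linear correction term. Since both $p$ and $\mu$ are probability distributions, $\sum_i p_i = \sum_i \mu_i = 1$, so $\sum_i(p_i-\mu_i) = 0$ and the last sum vanishes, leaving exactly the claimed bound. There is essentially no genuine obstacle here: the only point requiring a moment's attention is the choice $K=1$, which is justified precisely because probabilities never exceed $1$, and the cancellation of the first-order term, which is the reason the quadratic lower bound survives cleanly. In other words, the inequality in Lemma~\ref{basic} is sharp enough that the leading linear deviation averages out to zero across a normalized distribution, and what remains is the quadratic (Pinsker-type) control of $F$ from below.
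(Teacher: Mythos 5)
Your proof is correct and follows exactly the paper's argument: apply Lemma~\ref{basic} with $K=1$ to each term $p_i\log(p_i/\mu_i)$, sum over $i$, and observe that the linear term $\sum_i(p_i-\mu_i)$ vanishes because $p$ and $\mu$ are both normalized. Nothing to add.
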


\begin{proof}
Since $0\leq\mu_i,p_i\leq 1$, it follows from Lemma \ref{basic} that
\begin{equation*}
p_i\log\frac{p_i}{\mu_i} \geq \frac{1}{2}(p_i-\mu_i)^2+(p_i-\mu_i).
\end{equation*}
By the definition of $F$ in \eqref{F} we have
\begin{equation*}
F \geq \sum_i\left[\frac{1}{2}(p_i-\mu_i)^2+(p_i-\mu_i)\right] = \frac{1}{2}\sum_i(p_i-\mu_i)^2,
\end{equation*}
which gives the desired result.
\end{proof}

The next theorem gives a nonzero lower bound for the non-adiabatic entropy production rate $e_p^{(na)}$, also known as the free energy dissipation rate $f_d = -dF/dt$.
\begin{theorem}\label{lowerfd}
There exists a constant $c_1>0$ such that
\begin{equation*}
e_p^{(na)} = f_d = -\frac{dF}{dt} \geq c_1\sum_i\left[\sum_{j\neq i}(p_jq_{ji}-p_iq_{ij})\right]^2,
\end{equation*}
where $c_1$ only depends on the transition rate matrix $Q$.
\end{theorem}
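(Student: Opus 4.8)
The plan is to mimic the proof of Theorem \ref{lowerF}, using the quadratic remainder of Lemma \ref{basic}, but applied edgewise to the non-adiabatic rate, and then to convert the resulting edge quantity into the nodewise form appearing on the right-hand side. First I would recast the expression \eqref{fd} for $e_p^{(na)}$ in terms of the ratios $r_i=p_i/\mu_i$ and the nonnegative weights $b_{ij}=\mu_iq_{ij}$ (for $i\neq j$). Writing $p_iq_{ij}=b_{ij}r_i$ and $\log\frac{p_i\mu_j}{p_j\mu_i}=\log\frac{r_i}{r_j}$, and using the antisymmetry of $p_iq_{ij}-p_jq_{ji}$ under $i\leftrightarrow j$, I expect the single-sum form
\begin{equation*}
e_p^{(na)} = \sum_{i\neq j}b_{ij}\,r_i\log\frac{r_i}{r_j},
\end{equation*}
which is precisely the shape to which Lemma \ref{basic} applies.

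Next, since $0\leq r_i=p_i/\mu_i\leq 1/\mu_{\min}$ with $\mu_{\min}=\min_i\mu_i>0$, I would take $K=1/\mu_{\min}$ and apply Lemma \ref{basic} with $y=r_i$, $x=r_j$ to each summand, obtaining
\begin{equation*}
e_p^{(na)} \geq \frac{1}{2K}\sum_{i\neq j}b_{ij}(r_i-r_j)^2 + \sum_{i\neq j}b_{ij}(r_i-r_j).
\end{equation*}
The key simplification is that the linear term drops out: because $\mu$ is stationary, the matrix $(b_{ij})$ has vanishing row sums and column sums ($\sum_j b_{ij}=\mu_i\sum_j q_{ij}=0$ and $\sum_i b_{ij}=(\mu Q)_j=0$), and a short computation then gives $\sum_{i\neq j}b_{ij}(r_i-r_j)=0$. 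Hence
\begin{equation*}
e_p^{(na)} \geq \frac{1}{2K}\,\Phi(r), \qquad \Phi(r):=\sum_{i\neq j}b_{ij}(r_i-r_j)^2.
\end{equation*}

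It remains to dominate the target form $\Psi(r):=\sum_i\big[\sum_{j\neq i}(p_jq_{ji}-p_iq_{ij})\big]^2$, which by \eqref{master} equals $\sum_i(dp_i/dt)^2$, by a multiple of $\Phi(r)$; this comparison is the main obstacle, and I would resolve it by linear algebra rather than an explicit estimate. Both $\Phi$ and $\Psi$ are nonnegative quadratic forms in the vector $r=(r_1,\dots,r_N)$, and I would show that they have the same kernel. Indeed $\Phi(r)=0$ forces $r_i=r_j$ whenever $q_{ij}>0$, so irreducibility makes $r$ constant; and $\Psi(r)=0$ means $pQ=0$, so uniqueness of the stationary distribution forces $p\in\mathrm{span}(\mu)$, i.e.\ $r\in\mathrm{span}(\mathbf 1)$. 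Thus $\ker\Phi=\ker\Psi=\mathrm{span}(\mathbf 1)$. Decomposing $r=r_0+r_1$ with $r_0\in\mathrm{span}(\mathbf 1)$ and $r_1$ in its orthogonal complement, both forms depend only on $r_1$; on the compact unit sphere of that complement $\Phi$ attains a positive minimum $\delta$ and $\Psi$ a finite maximum $M$, yielding $\Phi(r)\geq(\delta/M)\Psi(r)$ for all $r$, where $\delta$ and $M$ depend only on $(b_{ij})$ and hence only on $Q$. Combining the three displays gives the theorem with $c_1=\delta/(2KM)>0$.
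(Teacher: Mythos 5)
Your proposal is correct, and its first half coincides with the paper's argument: you apply Lemma \ref{basic} so that the linear remainder telescopes to zero by the two stationarity identities $\sum_j q_{ij}=0$ and $\sum_i\mu_iq_{ij}=0$, leaving the Dirichlet-type form $\Phi(r)=\sum_{i\neq j}\mu_iq_{ij}(r_i-r_j)^2$ with $r_i=p_i/\mu_i$ (the paper's choice $y=p_i\mu_j$, $x=p_j\mu_i$ with $K=1$ produces the same quantity up to constants). Where you genuinely diverge is in passing from $\Phi(r)$ to $\Psi(r)=\sum_i\bigl[\sum_{j}p_jq_{ji}\bigr]^2$. The paper does this explicitly: it inserts the factor $q_{ij}/M\leq 1$, regroups as $\sum_j\mu_j\sum_i[\,\cdot\,]^2$, and applies Cauchy--Schwarz in $i$, yielding the concrete constant $c_1=r/(2MN)$ with $r=\min_i\mu_i$. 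You instead observe that $\Phi$ and $\Psi$ are both quadratic forms in $r$, invariant under adding constants, with common kernel $\mathrm{span}(\mathbf 1)$ (by irreducibility for $\Phi$, and by one-dimensionality of the left null space of $Q$ for $\Psi$), and conclude by compactness on the unit sphere of $\mathbf 1^{\perp}$. Both routes are valid. Yours is cleaner conceptually and generalizes to any pair of quadratic forms with nested kernels, but it costs you an explicit constant and requires the extra (true, but worth justifying) fact that uniqueness of the stationary distribution forces the left null space of $Q$ to be exactly one-dimensional; the paper's chain of elementary inequalities avoids any spectral or kernel considerations and keeps $c_1$ computable.
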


\begin{proof}
Since $0<\mu_i,p_i\leq 1$, it follows from Lemma \ref{basic} that
\begin{equation*}
p_i\mu_j\log\frac{p_i\mu_j}{p_j\mu_i}
\geq \frac{1}{2}(p_i\mu_j-p_j\mu_i)^2+(p_i\mu_j-p_j\mu_i).
\end{equation*}
By the definition of $f_d$ in \eqref{fd}, we have
\begin{equation*}
\begin{split}
e_p^{(na)} &= \sum_{i,j}p_iq_{ij}\log\frac{p_i\mu_j}{p_j\mu_i}
= \sum_{i,j}\frac{q_{ij}}{\mu_j}p_i\mu_j\log\frac{p_i\mu_j}{p_j\mu_i} \\
&\geq \sum_{i,j}\frac{q_{ij}}{\mu_j}\left[\frac{1}{2}(p_i\mu_j-p_j\mu_i)^2+(p_i\mu_j-p_j\mu_i)\right] \\
&= \frac{1}{2}\sum_{i,j}\frac{q_{ij}}{\mu_j}(p_i\mu_j-p_j\mu_i)^2+\sum_{i,j}p_iq_{ij}
-\sum_{i,j}\frac{p_j}{\mu_j}\mu_iq_{ij} \\
&= \frac{1}{2}\sum_{i,j}\frac{q_{ij}}{\mu_j}(p_i\mu_j-p_j\mu_i)^2,
\end{split}
\end{equation*}
where the last equality is due to $\sum_jq_{ij} = 0$ and $\sum_i\mu_iq_{ij} = 0$. Set $M = \max\{q_{ij}:i\neq j\}>0$. Then we have
\begin{equation*}
e_p^{(na)} \geq \frac{1}{2}\sum_{i,j}\frac{q_{ij}}{\mu_j}\frac{q_{ij}}{M}(p_i\mu_j-p_j\mu_i)^2 = \frac{1}{2M}\sum_j\mu_j\sum_i\left[\frac{q_{ij}}{\mu_j}(p_i\mu_j-p_j\mu_i)\right]^2.
\end{equation*}
Recall that the components of $\mu$ are all positive. Set $r = \min\{\mu_1,\mu_2\cdots,\mu_N\}>0$. It thus follows from the Cauchy-Schwarz inequality that
\begin{equation*}
\begin{split}
e_p^{(na)} &\geq \frac{r}{2MN}\sum_j\left[\sum_i\frac{q_{ij}}{\mu_j}(p_i\mu_j-p_j\mu_i)\right]^2
= \frac{r}{2MN}\sum_j\left[\sum_ip_iq_{ij}\right]^2 \\
&= \frac{r}{2MN}\sum_j\left[\sum_{i\neq j}(p_iq_{ij}-p_jq_{ji})\right]^2,
\end{split}
\end{equation*}
which gives the desired result.
\end{proof}

\begin{remark}
This theorem is inspired by the entropy-dissipation principle in nonequilibrium thermodynamics proposed in \cite{yong2004entropy} and has been proved when the steady-steady distribution $\mu$ satisfies the detailed balance condition \cite{yong2008interesting, yong2012conservation}. However, we do not need detailed balance here and thus the above theorem can be applied to general Markov jump processes.
\end{remark}

The following result is a direct corollary of Theorems \ref{lowerF} and \ref{lowerfd}.
\begin{corollary}
The following three statements are equivalent:\\
(a) The system is in a steady state; \\
(b) The free energy $F$ vanishes; \\
(c) The free energy dissipation rate $f_d$ vanishes.
\end{corollary}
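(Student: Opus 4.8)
The plan is to prove the two equivalences (a) $\Leftrightarrow$ (b) and (a) $\Leftrightarrow$ (c) separately; since each of (b) and (c) is thereby shown equivalent to (a), this yields the equivalence of all three statements. The two nonzero lower bounds established in Theorems \ref{lowerF} and \ref{lowerfd} do essentially all the work, supplemented only by the uniqueness of the steady-state distribution $\mu$ recalled in Section 2 (so that ``the system is in a steady state'' means precisely $p = \mu$).

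For (a) $\Leftrightarrow$ (b), I would first dispose of the easy direction: if the system is in a steady state, then $p = \mu$ by uniqueness, so $F = D(\mu\parallel\mu) = 0$ directly from \eqref{F}. For the converse, Theorem \ref{lowerF} gives $F \geq \tfrac{1}{2}\sum_i(p_i-\mu_i)^2 \geq 0$, so $F = 0$ forces $\sum_i(p_i-\mu_i)^2 = 0$, whence $p_i = \mu_i$ for every $i$; thus $p = \mu$ and the system is in its steady state.

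For (a) $\Leftrightarrow$ (c), the easy direction again uses $p = \mu$: then $dp/dt = \mu Q = 0$, so $F$ is constant in time and $f_d = -dF/dt = 0$. For the converse I would invoke Theorem \ref{lowerfd}: since $c_1 > 0$, the vanishing of $f_d$ forces $\sum_{j\neq i}(p_jq_{ji}-p_iq_{ij}) = 0$ for each $i$. The key observation is that, by the master equation \eqref{master}, this quantity is exactly $dp_i/dt$; hence $dp/dt = 0$, so $p$ is stationary, and uniqueness gives $p = \mu$.

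The only step that is more than a direct substitution is this last one: one must recognize that the summand appearing in the lower bound of Theorem \ref{lowerfd} coincides with the right-hand side of the master equation, so that $f_d = 0$ pins down $dp/dt = 0$ componentwise. Everything else reduces to the elementary fact that a sum of squares vanishes if and only if each term does, together with the nonnegativity and uniqueness facts already in hand.
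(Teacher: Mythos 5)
Your proposal is correct and follows essentially the same route as the paper's own proof: the easy directions from $p=\mu$, and the converses by reading off $p=\mu$ from Theorem \ref{lowerF} and $dp_i/dt=0$ from the lower bound in Theorem \ref{lowerfd}. The only cosmetic difference is that you explicitly invoke uniqueness of $\mu$ in the last step, which the paper leaves implicit.
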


\begin{proof}
If (a) holds, then $p = \mu$ and thus (b) and (c) follow from \eqref{F} and \eqref{fd} immediately. If (b) holds, then it follows from Theorem \ref{lowerF} that $p = \mu$, which shows that (a) holds. If (c) holds, then it follows from Theorem \ref{lowerfd} that for any $i$,
\begin{equation*}
\sum_{j\neq i}p_iq_{ij} = \sum_{j\neq i}p_jq_{ji}.
\end{equation*}
This shows that $p$ is a steady-state distribution and thus (a) holds.
\end{proof}

\begin{remark}
This corollary shows that the free energy dissipation rate vanishes if and only if the system is in a steady state. Therefore, a nonequilibrium system will never stop dissipating free energy unless it has reached the steady state. Thus the free energy dissipation rate characterizes the irreversibility in the spontaneous relaxation process towards the steady state. This is an interesting echo of the famous H-theorem for the Boltzmann equation, which characterizes the relaxation process of a thermodynamic system towards thermodynamic equilibrium.
\end{remark}

\subsection{Nonzero lower bound for the adiabatic entropy production rate}
We have given a nonzero lower bound for the non-adiabatic entropy production rate. The following theorem provides an analogue for the adiabatic part.
\begin{theorem}\label{lowerepad1}
There exists a constant $c_2>0$ such that
\begin{equation*}
e_p^{(ad)} \geq c_2\sum_{i,j}p_i(\mu_iq_{ij}-\mu_jq_{ji})^2,
\end{equation*}
where $c_2$ only depends on the transition rate matrix $Q$.
\end{theorem}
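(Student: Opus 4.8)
The plan is to mirror the argument used for the non-adiabatic part in Theorem \ref{lowerfd}, since the defining expression \eqref{epad} for $e_p^{(ad)}$ has exactly the same bilinear structure as the non-adiabatic rate, with the antisymmetric factor $\log(p_i\mu_j/p_j\mu_i)$ replaced by the antisymmetric factor $\log(\mu_iq_{ij}/\mu_jq_{ji})$. First I would exploit this antisymmetry: the logarithm changes sign under the swap $i\leftrightarrow j$, and so does $p_iq_{ij}-p_jq_{ji}$, so their product is symmetric and the prefactor $\tfrac12$ together with the symmetric summation collapses the double sum to
$$e_p^{(ad)} = \sum_{i,j}p_iq_{ij}\log\frac{\mu_iq_{ij}}{\mu_jq_{ji}},$$
where the diagonal terms contribute nothing because there the logarithm equals $\log 1 = 0$.

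Next, to set up Lemma \ref{basic}, I would write $p_iq_{ij} = (p_i/\mu_i)\,\mu_iq_{ij}$ and apply the lemma with $y = \mu_iq_{ij}$ and $x = \mu_jq_{ji}$. With $M = \max\{q_{ij}:i\neq j\}>0$ as in the previous proof, both $y$ and $x$ lie in $[0,M]$ for $i\neq j$ (using $\mu_i\leq 1$), so the lemma gives
$$\mu_iq_{ij}\log\frac{\mu_iq_{ij}}{\mu_jq_{ji}} \geq \frac{1}{2M}(\mu_iq_{ij}-\mu_jq_{ji})^2+(\mu_iq_{ij}-\mu_jq_{ji}).$$
On the diagonal the hypothesis of the lemma fails (since $\mu_iq_{ii}<0$), but there both sides vanish, so this pointwise bound holds trivially for all $i,j$ and may be summed over the full index range. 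Multiplying by the positive weights $p_i/\mu_i$ and summing yields $e_p^{(ad)}\geq \frac{1}{2M}\sum_{i,j}\frac{p_i}{\mu_i}(\mu_iq_{ij}-\mu_jq_{ji})^2 + \sum_{i,j}\frac{p_i}{\mu_i}(\mu_iq_{ij}-\mu_jq_{ji})$.

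The step I expect to be the crux is showing that the linear remainder vanishes. Splitting it as $\sum_{i,j}p_iq_{ij} - \sum_{i,j}(p_i/\mu_i)\mu_jq_{ji}$, the first sum is zero because $\sum_j q_{ij}=0$, while the second is zero because $\sum_j\mu_jq_{ji}=0$ is exactly the $i$-th component of the stationarity condition $\mu Q=0$. The point worth flagging is that \emph{both} conservation identities are needed here simultaneously, and it is only the stationarity of $\mu$ that kills the second piece. Finally, since $0<\mu_i\leq 1$ forces $1/\mu_i\geq 1$, I would simply drop the weight $1/\mu_i$ downward to $1$ and conclude
$$e_p^{(ad)}\geq \frac{1}{2M}\sum_{i,j}p_i(\mu_iq_{ij}-\mu_jq_{ji})^2,$$
so that $c_2 = 1/(2M)$ works. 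Note that, in contrast to Theorem \ref{lowerfd}, no Cauchy--Schwarz step is required: the target already carries the summation over both indices inside the weighted sum of squares, so after discarding the factor $1/\mu_i$ the estimate closes directly.
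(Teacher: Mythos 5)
Your proposal is correct and follows essentially the same route as the paper's own proof: apply Lemma \ref{basic} with $y=\mu_iq_{ij}$, $x=\mu_jq_{ji}$, $K=M$, weight by $p_i/\mu_i$, kill the linear remainder using $\sum_j q_{ij}=0$ together with $\mu Q=0$, and then drop $1/\mu_i\geq 1$ to get $c_2=1/(2M)$. The only difference is that you spell out the symmetrization step and the harmless diagonal terms, which the paper leaves implicit.
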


\begin{proof}
Set $M = \max\{q_{ij}:i\neq j\}>0$. Since $0\leq\mu_iq_{ij}\leq M$ for any $i\neq j$, it follows from Lemma \ref{basic} that
\begin{equation*}
\mu_iq_{ij}\log\frac{\mu_iq_{ij}}{\mu_jq_{ji}}
\geq \frac{1}{2M}(\mu_iq_{ij}-\mu_jq_{ji})^2+(\mu_iq_{ij}-\mu_jq_{ji}).
\end{equation*}
By the definition of $e_p^{(ad)}$ in \eqref{epad}, we have
\begin{equation*}
\begin{split}
e_p^{(ad)} &= \sum_{i,j}p_iq_{ij}\log\frac{\mu_iq_{ij}}{\mu_jq_{ji}}
= \sum_{i,j}\frac{p_i}{\mu_i}\mu_iq_{ij}\log\frac{\mu_iq_{ij}}{\mu_jq_{ji}} \\
&\geq \sum_{i,j}\frac{p_i}{\mu_i}
\left[\frac{1}{2M}(\mu_iq_{ij}-\mu_jq_{ji})^2+(\mu_iq_{ij}-\mu_jq_{ji})\right] \\
&= \frac{1}{2M}\sum_{i,j}\frac{p_i}{\mu_i}(\mu_iq_{ij}-\mu_jq_{ji})^2
\geq \frac{1}{2M}\sum_{i,j}p_i(\mu_iq_{ij}-\mu_jq_{ji})^2,
\end{split}
\end{equation*}
which gives the desired result.
\end{proof}

The following result is a direct corollary of the above theorem.
\begin{corollary}\label{lowerepad2}
Assume that the components of the initial distribution $p(0)$ are all positive. Then there exists a constant $c_3>0$ such that
\begin{equation*}
e_p^{(ad)} \geq c_3\sum_{i,j}(\mu_iq_{ij}-\mu_jq_{ji})^2,
\end{equation*}
where $c_3$ depends on the transition rate matrix $Q$ and the initial distribution $p(0)$.
\end{corollary}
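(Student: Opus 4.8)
The plan is to derive the corollary directly from Theorem \ref{lowerepad1} by replacing the weight $p_i$ inside the sum with a single positive constant that is uniform in time. Theorem \ref{lowerepad1} already gives
\begin{equation*}
e_p^{(ad)} \geq c_2\sum_{i,j}p_i(\mu_iq_{ij}-\mu_jq_{ji})^2 = c_2\sum_i p_i\sum_j(\mu_iq_{ij}-\mu_jq_{ji})^2 ,
\end{equation*}
so it suffices to find a constant $\delta>0$ with $p_i(t)\geq\delta$ for all $i$ and all $t\geq 0$; one then sets $c_3 = c_2\delta$.

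The crux is therefore establishing such a uniform-in-time lower bound under the assumption that the components of $p(0)$ are all positive. I would use two facts: the components of $p(t)$ are positive for every $t\geq 0$ (at $t=0$ by hypothesis, and for $t>0$ by the irreducibility recalled in Section 2), and the trajectory converges to the steady state, $p(t)\to\mu$ as $t\to\infty$, with each $\mu_i>0$. Given these, for each fixed $i$ the map $t\mapsto p_i(t)$ is continuous and strictly positive on $[0,\infty)$ and tends to the positive limit $\mu_i$; hence there is a time $T_i$ beyond which $p_i(t)\geq\mu_i/2$, while on the compact interval $[0,T_i]$ the continuous positive function $p_i$ attains a positive minimum. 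Taking the smaller of the two quantities gives $\inf_{t\geq 0}p_i(t)>0$, and setting $\delta=\min_i\inf_{t\geq 0}p_i(t)>0$ yields the bound. Note the assumption on $p(0)$ is essential: if some $p_i(0)=0$, the infimum over $t\geq 0$ would vanish.

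With $\delta$ in hand the conclusion is immediate, since for every $t$ and $i$ we have $p_i(t)\geq\delta$ and thus
\begin{equation*}
e_p^{(ad)} \geq c_2\sum_i p_i\sum_j(\mu_iq_{ij}-\mu_jq_{ji})^2 \geq c_2\delta\sum_{i,j}(\mu_iq_{ij}-\mu_jq_{ji})^2 ,
\end{equation*}
so we take $c_3=c_2\delta$. Because $\delta$, and hence $c_3$, is determined by the trajectory starting from $p(0)$, the constant depends on both $Q$ and $p(0)$, consistent with the statement.

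The step I expect to be the main obstacle is this uniform-in-time lower bound, and specifically the appeal to the convergence $p(t)\to\mu$, which is not among the facts explicitly recalled in Section 2. Since Section 5 is devoted to proving the long-time behavior of the master equation, I would want the argument here to invoke only the \emph{classical} convergence result (a standard consequence of Perron--Frobenius theory for irreducible finite chains) so as to avoid any circular dependence on the corollary itself. An alternative that sidesteps convergence would be to bound $p_i(t)$ below directly via $p(t)=p(0)e^{tQ}$, using positivity of the transition semigroup together with its spectral gap; however, the compactness argument above is the cleanest route and keeps the dependence of $c_3$ on $p(0)$ transparent.
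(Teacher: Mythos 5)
Your proposal is correct and takes essentially the same route as the paper: both start from Theorem \ref{lowerepad1} and reduce the corollary to a uniform-in-time positive lower bound on the components of $p(t)$, obtained from the positivity of $p(t)$ for all $t\geq 0$ together with the classical convergence $p(t)\to\mu$ with $\mu_i>0$. The paper simply defines $r=\min_i\min_{t\geq 0}p_i(t)$ and asserts $r>0$ from these two facts, whereas you spell out the compactness argument justifying that assertion; the substance is identical.
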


\begin{proof}
By Theorem \ref{lowerepad1}, there exists a constant $c_2>0$ depending on the transition rate matrix $Q$ such that
\begin{equation}\label{general}
e_p^{(ad)} \geq c_2\sum_{i,j}p_i(\mu_iq_{ij}-\mu_jq_{ji})^2.
\end{equation}
Let $r$ be a constant defined by
\begin{equation*}
r = \min_i\min_{t\geq 0}p_i(t).
\end{equation*}
Since the components of $p(0)$ are all positive, it is easy to see  that the components of $p(t)$ are all positive for any $t\geq 0$. Since the system is irreducible, the components of $\mu$ are all positive and $p(t)\rightarrow\mu$ as $t\rightarrow\infty$ \cite{norris1998markov}. The above two facts imply that $r>0$. It thus follows from \eqref{general} that
\begin{equation*}
e_p^{(ad)} \geq c_2r\sum_{i,j}(\mu_iq_{ij}-\mu_jq_{ji})^2,
\end{equation*}
which gives the desired result.
\end{proof}

\begin{remark}
We know from \cite{norris1998markov} that even if some components of the initial distribution $p(0)$ are zero, all components will become positive after an arbitrarily small time. Therefore, the assumption of the above corollary does not impose too much restriction on the system. In addition, it has been shown that if the steady state of the system is an NESS, then the adiabatic entropy production rate $e_p^{(ad)}>0$ \cite{ge2010physical, esposito2010three}. The above corollary further strengthens this result by providing a positive lower bound for $e_p^{(ad)}$, which is also time-independent. This reveals an intrinsic strong dissipative property for the master equation.
\end{remark}

The next result follows directly from Corollary \ref{lowerepad2}.
\begin{corollary}
Assume that the components of the initial distribution $p(0)$ are all positive. Then the following three statements are equivalent: \\
(a) The steady state of the system is an equilibrium state; \\
(b) The steady state of the system satisfies the detailed balance condition; \\
(c) The adiabatic entropy production rate $e_p^{(ad)}$ vanishes.
\end{corollary}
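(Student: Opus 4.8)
The plan is to verify the three-way equivalence by a short cycle, noting that the substantive work has already been carried out in Corollary~\ref{lowerepad2}. First I would observe that (a) and (b) are equivalent essentially by definition: the preliminaries define an equilibrium state as a steady state for which the detailed balance condition $\mu_iq_{ij}=\mu_jq_{ji}$ holds for every pair $i,j$, so these are merely two names for the same situation. It therefore suffices to establish the equivalence of (b) and (c).

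For the direction (b)$\Rightarrow$(c), I would substitute the detailed balance condition directly into the expression \eqref{epad} for $e_p^{(ad)}$. If $\mu_iq_{ij}=\mu_jq_{ji}$ for all $i,j$, then each logarithm $\log(\mu_iq_{ij}/\mu_jq_{ji})$ equals $\log 1 = 0$, so every summand vanishes and $e_p^{(ad)}=0$. The only point requiring care is the degenerate case $q_{ij}=0$: since the components of $\mu$ are all positive, detailed balance then forces $q_{ji}=0$ as well, and the corresponding term is $0\cdot\log(0/0)=0$ under the convention $0/0=1$ fixed after \eqref{ep}.

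For the converse (c)$\Rightarrow$(b), I would invoke the nonzero lower bound from Corollary~\ref{lowerepad2}, which is available precisely because $p(0)$ has all positive components. That corollary supplies a constant $c_3>0$, depending only on $Q$ and $p(0)$, with $e_p^{(ad)}\geq c_3\sum_{i,j}(\mu_iq_{ij}-\mu_jq_{ji})^2$. Hence $e_p^{(ad)}=0$ forces $\sum_{i,j}(\mu_iq_{ij}-\mu_jq_{ji})^2=0$, and since each summand is nonnegative, $\mu_iq_{ij}=\mu_jq_{ji}$ for every pair $i,j$, which is exactly the detailed balance condition (b). Combining the two directions with the definitional equivalence of (a) and (b) then closes the cycle.

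I do not anticipate a genuine obstacle here, since this corollary is a direct consequence of the quantitative bound already established; the result is essentially the observation that the lower bound in Corollary~\ref{lowerepad2} vanishes exactly when detailed balance holds. The only subtlety worth flagging is the bookkeeping for pairs with vanishing transition rates in the (b)$\Rightarrow$(c) step, which is handled cleanly by the $0/0=1$ convention.
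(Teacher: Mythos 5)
Your proposal is correct and follows essentially the same route as the paper: the (a)$\Leftrightarrow$(b) equivalence is definitional, (b)$\Rightarrow$(c) is direct substitution into \eqref{epad}, and (c)$\Rightarrow$(b) comes from the quantitative lower bound on $e_p^{(ad)}$. The only cosmetic difference is that you invoke Corollary~\ref{lowerepad2} where the paper applies Theorem~\ref{lowerepad1} together with the positivity of $p(t)$ directly, but since Corollary~\ref{lowerepad2} is itself derived from that theorem by the same positivity argument, the two are interchangeable here.
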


\begin{proof}
The equivalence of (a) and (b) follows from the definition of the equilibrium state. We next prove the equivalence of (b) and (c). If (b) is true, then the detailed balance condition $\mu_iq_{ij} = \mu_jq_{ji}$ holds for any pair of states $i$ and $j$. It thus follows from \eqref{epad} that $e_p^{(ad)}=0$. On the other hand, if (c) holds, then it follows from Theorem \ref{lowerepad1} that
\begin{equation*}
\sum_{i,j}p_i(\mu_iq_{ij}-\mu_jq_{ji})^2 = 0.
\end{equation*}
Since the components of the initial distribution $p(0)$ are all positive, the components of $p(t)$ are all positive for any $t\geq 0$. This shows that $\mu_iq_{ij} = \mu_jq_{ji}$ for any pair of states $i$ and $j$, which implies that (b) holds.
\end{proof}

\begin{remark}
This corollary shows that adiabatic entropy production rate vanishes if and only if the steady state of the system is an equilibrium state. Thus the adiabatic part reflects the irreversibility in an NESS caused by the breaking of detailed balance. In fact, even if the system has reaches the steady state, some kind of circular motions may still exist to maintain an NESS and give rise to a positive $e_p^{(ad)}$ \cite{jiang2004mathematical}. This phenomenon is typical in many biochemical processes and constitutes a major difference between an NESS and an equilibrium state \cite{zhang2012stochastic}.
\end{remark}

\subsection{Nonzero lower bound for the total entropy production rate}
We have given nonzero lower bounds for the adiabatic and non-adiabatic entropy production rates, which automatically give rise to a nonzero lower bound for the total entropy production rate $e_p$. The following theorem provides another nonzero lower bound for $e_p$.
\begin{theorem}\label{lowerep}
There exists constants $c_4>0$ such that
\begin{equation*}
e_p \geq c_4\sum_{i,j}(p_iq_{ij}-p_jq_{ji})^2,
\end{equation*}
where $c_4$ only depends on the transition rate matrix $Q$.
\end{theorem}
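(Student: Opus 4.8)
The plan is to follow the same template as the proofs of Theorems~\ref{lowerfd} and \ref{lowerepad1}, but the target here is already a sum of squared \emph{net} fluxes $(p_iq_{ij}-p_jq_{ji})^2$ rather than a sum over individual fluxes, so I expect this proof to be even more direct: in particular, no Cauchy--Schwarz step should be needed.

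First I would symmetrize the defining expression \eqref{ep} for $e_p$. Since the factor $\log\frac{p_iq_{ij}}{p_jq_{ji}}$ changes sign under the swap $i\leftrightarrow j$ while $(p_iq_{ij}-p_jq_{ji})$ also changes sign, the two halves of the antisymmetrized sum coincide, and one obtains
\begin{equation*}
e_p = \sum_{i,j}p_iq_{ij}\log\frac{p_iq_{ij}}{p_jq_{ji}}.
\end{equation*}
The diagonal terms $i=j$ contribute nothing, so the sum effectively runs over $i\neq j$, where both $p_iq_{ij}$ and $p_jq_{ji}$ are nonnegative.

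Next I would apply Lemma~\ref{basic} directly with $y=p_iq_{ij}$ and $x=p_jq_{ji}$. Setting $M=\max\{q_{ij}:i\neq j\}>0$ and using $0\leq p_i\leq 1$, we have $0\leq p_iq_{ij},\,p_jq_{ji}\leq M$ for all $i\neq j$, so the choice $K=M$ is admissible and yields
\begin{equation*}
p_iq_{ij}\log\frac{p_iq_{ij}}{p_jq_{ji}} \geq \frac{1}{2M}(p_iq_{ij}-p_jq_{ji})^2+(p_iq_{ij}-p_jq_{ji}).
\end{equation*}
Summing over all $i,j$ then produces the quadratic term $\frac{1}{2M}\sum_{i,j}(p_iq_{ij}-p_jq_{ji})^2$ plus a linear remainder $\sum_{i,j}(p_iq_{ij}-p_jq_{ji})$.

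The final step is to observe that the linear remainder vanishes identically: relabeling $i\leftrightarrow j$ in its second half gives $\sum_{i,j}p_jq_{ji}=\sum_{i,j}p_iq_{ij}$, so $\sum_{i,j}(p_iq_{ij}-p_jq_{ji})=0$. This leaves
\begin{equation*}
e_p \geq \frac{1}{2M}\sum_{i,j}(p_iq_{ij}-p_jq_{ji})^2,
\end{equation*}
and the claim holds with $c_4=1/(2M)$, depending only on $Q$. The only points requiring care --- and the closest thing to an obstacle in an otherwise routine argument --- are the boundary behavior under the $0/0=1$ convention (when some $p_jq_{ji}=0$ the corresponding term is either $+\infty$ or $0$, so the pointwise inequality remains valid) and verifying that the linear term cancels exactly rather than merely being bounded below.
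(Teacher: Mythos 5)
Your proposal is correct and follows essentially the same route as the paper: the paper likewise rewrites $e_p=\sum_{i,j}p_iq_{ij}\log\frac{p_iq_{ij}}{p_jq_{ji}}$, applies Lemma~\ref{basic} with $K=M=\max\{q_{ij}:i\neq j\}$, and cancels the linear remainder $\sum_{i,j}(p_iq_{ij}-p_jq_{ji})=0$ to obtain $c_4=1/(2M)$. Your observation that no Cauchy--Schwarz step is needed here, unlike in Theorem~\ref{lowerfd}, matches the paper's argument exactly.
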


\begin{proof}
Set $M = \max\{q_{ij}:i\neq j\}>0$. Since $0\leq p_iq_{ij}\leq M$ for any $i\neq j$, it follows from Lemma \ref{basic} that
\begin{equation*}
p_iq_{ij}\log\frac{p_iq_{ij}}{p_jq_{ji}}
\geq \frac{1}{2M}(p_iq_{ij}-p_jq_{ji})^2+(p_iq_{ij}-p_jq_{ji}).
\end{equation*}
By the definition of $e_p$ in \eqref{ep}, we have
\begin{equation*}
\begin{split}
e_p &= \sum_{i,j}p_iq_{ij}\log\frac{p_iq_{ij}}{p_jq_{ji}}
\geq \sum_{i,j}\left[\frac{1}{2M}(p_iq_{ij}-p_jq_{ji})^2+(p_iq_{ij}-p_jq_{ji})\right] \\
&= \frac{1}{2M}\sum_{i,j}(p_iq_{ij}-p_jq_{ji})^2+\sum_{i,j}(p_iq_{ij}-p_jq_{ji}) = \frac{1}{2M}\sum_{i,j}(p_iq_{ij}-p_jq_{ji})^2,
\end{split}
\end{equation*}
which gives the desired result.
\end{proof}

\begin{remark}
It is interesting to note that, in the above theorem, the lower bound for the entropy production rate $e_p$ is given by the sum of squares of the local fluxes $J_{ij} = p_iq_{ij}-p_jq_{ji}$ between each pair of states $i$ and $j$. In contrast, it is shown in Theorem \ref{lowerfd} that the free energy dissipation rate $f_d$ is bounded from below by the sum of squares of the total fluxes $J_i = \sum_{j\neq i}(p_jq_{ji}-p_iq_{ij})$ through each state $i$.
\end{remark}

\subsection{Results for the Tsallis-type relative entropy}
Theorems \ref{lowerF} and \ref{lowerfd} give the nonzero lower bounds for the free energy and its dissipation rate, where the free energy can be expressed as the Boltzmann-Shannon relative entropy between $p$ and $\mu$. Here we try to extend these results to the Tsallis-type relative entropy, which was introduced by Tsallis in \cite{tsallis1988possible} as a generalization of the classical Boltzmann-Gibbs statistics. The resulting theory is considered to be important for the non-extensive thermodynamics. It has also been found applications in a wide range of natural, artificial, and social complex systems \cite{tsallis2011nonadditive}. Specifically, we recall the following definition.

\begin{definition}
Let $u = (u_1,u_2,\cdots,u_N)$ and $v = (v_1,v_2,\cdots,v_N)$ be two probability distributions where the components of $v$ are all positive. Then, for any real number $\alpha\neq 0,1$, the Tsallis-type relative entropy of order $\alpha$ between $u$ and $v$ is defined as
\begin{equation*}
D_{\alpha}(u\|v) = \frac{1}{\alpha(\alpha-1)}\left[\sum_iu_i\left(\frac{u_i}{v_i}\right)^{\alpha-1}-1\right].
\end{equation*}
\end{definition}

By L'Hospital's rule, it is easy to see that
\begin{equation*}
\lim_{\alpha\rightarrow1}D_{\alpha}(u\|v) = \lim_{\alpha\rightarrow1}\sum_iu_i\left(\frac{u_i}{v_i}\right)^{\alpha-1}\log\frac{u_i}{v_i} = D(u\|v).
\end{equation*}
This shows that the Tsallis-type relative entropy converges to the Boltzmann-Shannon one as $\alpha\rightarrow1$. In analogy to the definition of the free energy $F$, we define the Tsallis-type free energy $F_\alpha$ as
\begin{equation*}
F_\alpha = D_{\alpha}(p\|\mu) = \frac{1}{\alpha(\alpha-1)}\left[\sum_ip_i\left(\frac{p_i}{\mu_i}\right)^{\alpha-1}-1\right].
\end{equation*}
It has been proved in \cite{shiino1998h} that $F_\alpha$ has the following nice properties similar to those of the free energy $F$:
\begin{equation*}
F_{\alpha} \geq 0,\;\;\;-\frac{dF_{\alpha}}{dt} \geq 0.
\end{equation*}
Next we shall strengthen these results by providing nonzero lower bounds for the Tsallis-type free energy $F_\alpha$ and its dissipation rate $-dF_\alpha/dt$.

To this end, we need the following two lemmas.
\begin{lemma}\label{MP} (See \cite{mitrinovic2013classical})
Let $\alpha$ and $\beta$ be two real numbers satisfying $\alpha<2,\alpha\neq 0,1$ and $\beta\geq 0$. Then for any $x\in[-1,\beta]$, the following inequality holds:
\begin{equation*}
\frac{1}{\alpha(\alpha-1)}[(x+1)^{\alpha}-\alpha x-1] \geq \frac{1}{2}(\beta+1)^{\alpha-2}x^2,
\end{equation*}
where the case of $x=-1$ is understood in the limit sense.
\end{lemma}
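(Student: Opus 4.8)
The plan is to mirror the Taylor-expansion argument used for Lemma \ref{basic}. I would introduce the single-variable function
\[
g(x) = \frac{1}{\alpha(\alpha-1)}\left[(x+1)^{\alpha}-\alpha x-1\right],
\]
namely the left-hand side, and show that it dominates the quadratic $\tfrac12(\beta+1)^{\alpha-2}x^2$ on $[-1,\beta]$. The central computation is that $g$ has a double zero at $x=0$ with an explicit second derivative: one checks directly that $g(0)=0$, that $g'(x)=\frac{1}{\alpha-1}\left[(x+1)^{\alpha-1}-1\right]$ so that $g'(0)=0$, and that
\[
g''(x)=(x+1)^{\alpha-2},
\]
the factors $\alpha(\alpha-1)$ cancelling neatly. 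Here $g$ is twice differentiable on the open interval $(-1,\infty)$, on which $x+1>0$.

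Next I would apply Taylor's theorem with the Lagrange remainder about the base point $x=0$: for each $x\in(-1,\beta]$ there is a $\xi$ strictly between $0$ and $x$ with
\[
g(x)=g(0)+g'(0)x+\tfrac12 g''(\xi)x^2=\tfrac12(\xi+1)^{\alpha-2}x^2 .
\]
Since $\xi$ lies between $0$ and $x$ and $x\le\beta$, in either sign case one has $\xi\le\beta$, hence $0<\xi+1\le\beta+1$. The decisive step is the hypothesis $\alpha<2$: it makes the exponent $\alpha-2$ negative, so $t\mapsto t^{\alpha-2}$ is strictly decreasing on $(0,\infty)$, and therefore $(\xi+1)^{\alpha-2}\ge(\beta+1)^{\alpha-2}$. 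Substituting this bound into the remainder yields
\[
g(x)\ge\tfrac12(\beta+1)^{\alpha-2}x^2,
\]
which is exactly the claimed inequality for $x\in(-1,\beta]$.

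Finally I would treat the endpoint $x=-1$ separately, as the statement indicates it is meant in a limiting sense. Letting $x\to-1^+$, the inequality just established is preserved; concretely, for $0<\alpha<2$ the left-hand side tends to the finite value $1/\alpha>0$ while the right-hand side tends to $\tfrac12(\beta+1)^{\alpha-2}$, and for $\alpha<0$ the left-hand side diverges to $+\infty$, so the bound is trivially valid.

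I expect the only genuine subtlety, the \emph{hard part}, to be bookkeeping at the boundary rather than the main estimate: one must verify that the remainder point $\xi$ never reaches $-1$ (so that $g''(\xi)$ is well defined) and that the monotonicity direction of $t^{\alpha-2}$ is correctly tied to the sign of $\alpha-2$, which is the sole place the hypothesis $\alpha<2$ enters. The cancellation producing the clean form $g''(x)=(x+1)^{\alpha-2}$ is what makes the whole argument go through, exactly as $f''(x)=1/x$ did in Lemma \ref{basic}.
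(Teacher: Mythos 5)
Your argument is correct: the computation $g''(x)=(x+1)^{\alpha-2}$, the Lagrange-remainder expansion about $0$, and the monotonicity of $t\mapsto t^{\alpha-2}$ for $\alpha<2$ (with $\xi+1\le\beta+1$ in both sign cases and the $x=-1$ endpoint taken as a limit) together give exactly the stated bound. The paper itself offers no proof of Lemma \ref{MP} --- it only cites \cite{mitrinovic2013classical} --- so there is nothing internal to compare against, but your proof is precisely the natural analogue of the paper's own proof of Lemma \ref{basic} and fills the gap cleanly.
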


\begin{lemma}\label{Leindler} (See \cite{leindler1972on})
For any $\alpha\geq 2$, there exists a constant $u_\alpha>0$ depending on $\alpha$ such that for any complex number $z$, the following inequality holds:
\begin{equation*}
\frac{1}{\alpha(\alpha-1)}[|z+1|^{\alpha}-1-\alpha\textrm{Re}(z)] \geq u_\alpha|z|^\alpha.
\end{equation*}
\end{lemma}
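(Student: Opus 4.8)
The plan is to read the left-hand side as the error term in the supporting-hyperplane inequality for the convex function $h(w)=|w|^{\alpha}$ on $\mathbb{C}\cong\mathbb{R}^{2}$, and then to control the resulting ratio by a compactness-and-asymptotics argument. Writing $g(z)=|z+1|^{\alpha}-1-\alpha\,\mathrm{Re}(z)$, a direct computation of $\nabla h$ at $w=1$ shows that $g(z)=h(1+z)-h(1)-\nabla h(1)\cdot z$, i.e. $g$ is exactly the Bregman remainder of $h$ based at $w=1$. Since the asserted inequality is $g(z)\ge u_\alpha|z|^{\alpha}$ with $u_\alpha>0$, and $g(0)=0$, it suffices to prove that the continuous function $\phi(z):=g(z)/|z|^{\alpha}$ on $\mathbb{C}\setminus\{0\}$ has a strictly positive infimum.

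First I would establish that $\phi>0$ everywhere on $\mathbb{C}\setminus\{0\}$. Because $h(w)=|w|^{\alpha}$ is strictly convex for $\alpha>1$ (its Hessian at any $w\ne0$ equals $|w|^{\alpha-2}\,\mathrm{diag}(\alpha(\alpha-1),\alpha)$ in the radial/tangential frame, which is positive definite, and the restriction to any line through the origin is strictly convex as well), the tangent plane at $w=1$ touches the graph only at that point, so $g(z)>0$ for every $z\ne0$, whence $\phi>0$. The point $z=-1$, where $|z+1|=0$ and $h$ fails to be smooth, causes no difficulty: $g$ is still continuous there and $g(-1)=\alpha-1>0$.

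Next I would pin down the two ends of $\phi$. As $|z|\to\infty$ one has $|z+1|^{\alpha}=|z|^{\alpha}\bigl(1+o(1)\bigr)$ while $1+\alpha\,\mathrm{Re}(z)=O(|z|)=o(|z|^{\alpha})$ since $\alpha\ge2>1$, so $\phi(z)\to1$. As $z\to0$, a second-order Taylor expansion of $h$ about $w=1$ (legitimate because $h$ is smooth near $w=1$) gives $g(z)=\tfrac12\langle Hz,z\rangle+o(|z|^{2})$ with $H=\mathrm{diag}(\alpha(\alpha-1),\alpha)\succ0$, hence $g(z)\gtrsim|z|^{2}$ for small $z$ and therefore $\phi(z)\gtrsim|z|^{2-\alpha}$, which tends to $+\infty$ when $\alpha>2$ and is the constant $1$ when $\alpha=2$ (indeed one checks $g(z)=|z|^{2}$ exactly for $\alpha=2$, so $u_2=\tfrac12$). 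In all cases $\liminf_{z\to0}\phi(z)>0$. Consequently there are radii $0<r<R$ with $\phi\ge\tfrac12$ on $\{|z|\le r\}\cup\{|z|\ge R\}$.

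Finally, on the compact annulus $\{r\le|z|\le R\}$ the function $\phi$ is continuous and strictly positive, so it attains a minimum $m>0$ by the extreme value theorem; taking $u_\alpha=\min\{m,\tfrac12\}>0$ closes the argument. The main obstacle is the small-$z$ analysis: one must verify that the quadratic leading term of the Bregman remainder genuinely dominates $|z|^{\alpha}$ near the origin (which forces the separate, and sharp, treatment of the threshold case $\alpha=2$, where no such domination is available and equality is attained), while also checking that the non-smoothness of $h$ at $w=0$, i.e. at $z=-1$, only touches the continuity of $g$ and not the strict positivity used throughout.
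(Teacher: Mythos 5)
Your argument is correct and essentially complete. Note first that the paper does not prove this lemma at all --- it simply cites Leindler's 1972 paper --- so any proof you give is necessarily ``a different route''; what you have supplied is a self-contained elementary proof, which is a genuine addition. The structure is sound: identifying $g(z)=|1+z|^{\alpha}-1-\alpha\,\mathrm{Re}(z)$ as the Bregman remainder of the strictly convex function $h(w)=|w|^{\alpha}$ at $w=1$ gives strict positivity of $g$ off the origin; the limit $\phi(z)=g(z)/|z|^{\alpha}\to 1$ at infinity and the lower bound $\phi(z)\gtrsim |z|^{2-\alpha}$ near the origin (from the positive-definite Hessian $\mathrm{diag}(\alpha(\alpha-1),\alpha)$ at $w=1$, with the threshold case $\alpha=2$ checked exactly) control both ends, and compactness of the remaining annulus finishes the job. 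The only blemishes are cosmetic. First, a normalization slip: your $\phi$ omits the prefactor $1/(\alpha(\alpha-1))$, so the constant you extract at the end should be $u_{\alpha}=\min\{m,\tfrac12\}/(\alpha(\alpha-1))$ rather than $\min\{m,\tfrac12\}$; you implicitly acknowledge this when you correctly report $u_2=\tfrac12$ (since $g(z)=|z|^{2}$ exactly for $\alpha=2$), but the final line is inconsistent with that. Second, the remark about non-smoothness of $h$ at $w=0$ is a harmless aside: for $\alpha\ge 2$ the function $|w|^{\alpha}$ is in fact $C^{1}$ everywhere, and in any case only continuity and positivity of $g$ at $z=-1$ are used, both of which you verify ($g(-1)=\alpha-1>0$). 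Compared with invoking Leindler's result as the paper does, your approach buys self-containedness and an explicit (if non-optimal) route to the constant, at the cost of not producing a sharp $u_{\alpha}$; for the purposes of Theorems \ref{lowerfalpha} and \ref{lowerfdalpha} only existence of some $u_{\alpha}>0$ is needed, so your proof suffices.
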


The following theorem gives a nonzero lower bound for the Tsallis-type free energy.
\begin{theorem}\label{lowerfalpha}
For any $\alpha\neq 0,1$,
\begin{equation*}
F_\alpha \geq
\begin{cases}
\frac{1}{2}\sum\limits_i\mu_i^{1-\alpha}(p_i-\mu_i)^2, &\textrm{if\;} \alpha<2, \\
u_\alpha \sum\limits_i\mu_i^{1-\alpha}|p_i-\mu_i|^{\alpha}, &\textrm{if\;} \alpha\geq 2,
\end{cases}
\end{equation*}
where $u_\alpha$ is the constant in Lemma \ref{Leindler}.
\end{theorem}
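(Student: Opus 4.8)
The plan is to reduce both cases to the pointwise inequalities in Lemmas \ref{MP} and \ref{Leindler} through a single algebraic rewriting of $F_\alpha$. First I would substitute $x_i = (p_i-\mu_i)/\mu_i$, so that $p_i = \mu_i(1+x_i)$ and $p_i(p_i/\mu_i)^{\alpha-1} = p_i^\alpha\mu_i^{1-\alpha} = \mu_i(1+x_i)^\alpha$. Using the normalizations $\sum_i\mu_i = 1$ and $\sum_i\mu_i x_i = \sum_i(p_i-\mu_i) = 0$, one can replace the $-1$ by $-\sum_i\mu_i$ and insert a vanishing $-\alpha\sum_i\mu_i x_i$ term to obtain the crucial identity $F_\alpha = \frac{1}{\alpha(\alpha-1)}\sum_i\mu_i[(1+x_i)^\alpha - \alpha x_i - 1]$. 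The summand now matches exactly the left-hand sides of the two lemmas with $x = x_i$ (resp. $z = x_i$), which reduces the problem to a termwise estimate.

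For the case $\alpha < 2$, I would apply Lemma \ref{MP} termwise, but with a different $\beta$ for each index $i$. The admissibility range requires $x_i \in [-1, \beta_i]$: the lower bound $x_i \geq -1$ holds because $p_i \geq 0$, and the upper bound follows from $p_i \leq 1$, which gives $x_i = p_i/\mu_i - 1 \leq 1/\mu_i - 1$. The key idea is to choose $\beta_i = 1/\mu_i - 1$ precisely so that $\beta_i + 1 = 1/\mu_i$. Then Lemma \ref{MP} yields the per-term bound $\frac{1}{2}(\beta_i+1)^{\alpha-2}x_i^2 = \frac{1}{2}\mu_i^{2-\alpha}(p_i-\mu_i)^2/\mu_i^2$, and multiplying by $\mu_i$ and summing gives $F_\alpha \geq \frac{1}{2}\sum_i\mu_i^{1-\alpha}(p_i-\mu_i)^2$, as claimed.

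For the case $\alpha \geq 2$, I would apply Lemma \ref{Leindler} with the real number $z = x_i$. Since $x_i \geq -1$ we have $|x_i + 1| = 1 + x_i$, so the left-hand side of the lemma coincides with the summand in the identity above, and the lemma gives $\frac{1}{\alpha(\alpha-1)}[(1+x_i)^\alpha - \alpha x_i - 1] \geq u_\alpha|x_i|^\alpha$. Multiplying by $\mu_i$, using $\mu_i|x_i|^\alpha = \mu_i^{1-\alpha}|p_i-\mu_i|^\alpha$, and then summing produces the second branch $F_\alpha \geq u_\alpha\sum_i\mu_i^{1-\alpha}|p_i-\mu_i|^\alpha$.

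The main obstacle is the first case: a naive application of Lemma \ref{MP} with a single global $\beta = \max_i(1/\mu_i - 1)$ would only produce a bound with a uniform constant $(\beta+1)^{\alpha-2}$ rather than the state-dependent weight $\mu_i^{1-\alpha}$. The essential observation is that the lemma may be invoked for each $x_i$ with its own sharpest admissible $\beta_i = 1/\mu_i - 1$, and it is exactly this index-dependent choice that conjures the weight $\mu_i^{1-\alpha}$ after multiplication by $\mu_i$. Everything else is the routine bookkeeping of exponents.
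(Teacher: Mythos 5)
Your proposal is correct and follows essentially the same route as the paper: the same substitution $x_i = p_i/\mu_i - 1$, the same rewriting of $F_\alpha$ using $\sum_i\mu_i=1$ and $\sum_i\mu_i x_i=0$, and the same termwise application of Lemma \ref{MP} with the index-dependent $\beta_i = 1/\mu_i-1$ (resp.\ Lemma \ref{Leindler}) to produce the weight $\mu_i^{1-\alpha}$. The paper's proof makes the per-index choice of $\beta$ implicitly; you merely spell it out.
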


\begin{proof}
Let $x_i = p_i/\mu_i-1$ and $\beta = 1/\mu_i-1\geq 0$. Then $x_i\in[-1,\beta]$ and it follows from Lemma \ref{MP} that for any $\alpha<2$ and $\alpha\neq 0,1$,
\begin{equation*}
\frac{1}{\alpha(\alpha-1)}[(x_i+1)^\alpha-\alpha x_i-1] \geq \frac{1}{2}\mu_i^{2-\alpha}x_i^2 = \frac{1}{2}\mu_i^{-\alpha}(p_i-\mu_i)^2.
\end{equation*}
Thus we have
\begin{equation*}
\begin{split}
F_\alpha &= \frac{1}{\alpha(\alpha-1)}\left[\sum_i\mu_i(x_i+1)^\alpha-1\right]
= \frac{1}{\alpha(\alpha-1)}\sum_i\mu_i[(x_i+1)^\alpha-\alpha x_i-1] \\
&\geq \frac{1}{2}\sum_i\mu_i^{1-\alpha}(p_i-\mu_i)^2.
\end{split}
\end{equation*}
This shows that the theorem holds when $\alpha<2$. On the other hand, it follows from Lemma \ref{Leindler} that for any $\alpha\geq 2$,
\begin{equation*}
\begin{split}
F_\alpha &= \frac{1}{\alpha(\alpha-1)}\left[\sum_i\mu_i(x_i+1)^\alpha-1\right]
= \frac{1}{\alpha(\alpha-1)}\sum_i\mu_i[(x_i+1)^\alpha-\alpha x_i-1] \\
&\geq u_\alpha\sum_i\mu_i|x_i|^\alpha = u_\alpha\sum_i\mu_i^{1-\alpha}|p_i-\mu_i|^\alpha.
\end{split}
\end{equation*}
This shows that the theorem also holds when $\alpha\geq 2$.
\end{proof}

The following theorem gives a nonzero lower bound for the dissipation rate $-dF_\alpha/dt$ of the Tsallis-type free energy.
\begin{theorem}\label{lowerfdalpha}
For any $\alpha\neq 0,1$, the Tsallis-type free energy has the following dissipative property:
\begin{equation*}
-\frac{dF_{\alpha}}{dt} \geq
\begin{cases}
c_\alpha \sum\limits_i[\sum\limits_{j\neq i}(q_{ij}p_j-q_{ji}p_i)]^2, &\textrm{if\;} \alpha<2, \\
c_\alpha \sum\limits_i|\sum\limits_{j\neq i}(q_{ij}p_j-q_{ji}p_i)|^\alpha, &\textrm{if\;} \alpha\geq 2,
\end{cases}
\end{equation*}
where $c_\alpha>0$ is a constant only depending on $\alpha$ and the transition rate matrix $Q$.
\end{theorem}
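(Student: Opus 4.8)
The plan is to recast the dissipation rate as a weighted sum of nonnegative Bregman divergences and then bound each divergence from below by Lemmas \ref{MP} and \ref{Leindler}. Set $\rho_i=p_i/\mu_i$ and introduce the convex generating function $\Phi(t)=\frac{1}{\alpha(\alpha-1)}\big[t^{\alpha}-\alpha t+\alpha-1\big]$, so that $\Phi(1)=\Phi'(1)=0$, $\Phi''(t)=t^{\alpha-2}>0$, and $F_{\alpha}=\sum_i\mu_i\Phi(\rho_i)$. Note that the argument $x+1$ of Lemma \ref{MP} and the quantity $|z+1|$ of Lemma \ref{Leindler} are built precisely so that their left-hand sides equal $\Phi$ evaluated at $x+1$ and $z+1$.

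First I would derive a clean identity for the dissipation rate. Differentiating and using $\mu_i\dot\rho_i=\dot p_i$ together with the master equation $\dot p_i=\sum_j p_jq_{ji}$ gives $\frac{dF_{\alpha}}{dt}=\sum_{i,j}\Phi'(\rho_i)\mu_j\rho_jq_{ji}$. Subtracting the vanishing quantity $\sum_i\Phi'(\rho_i)\rho_i\sum_j\mu_jq_{ji}=0$ (stationarity $\mu Q=0$) and rewriting $\Phi'(\rho_i)(\rho_j-\rho_i)=\Phi(\rho_j)-\Phi(\rho_i)-D_{\Phi}(\rho_j,\rho_i)$, where $D_{\Phi}(\rho_j,\rho_i)=\Phi(\rho_j)-\Phi(\rho_i)-\Phi'(\rho_i)(\rho_j-\rho_i)\ge0$ is the Bregman divergence, the two remaining potential sums cancel by the row-sum relation $\sum_i q_{ji}=0$ combined with $\sum_j\mu_jq_{ji}=0$. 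This yields the identity
\[
-\frac{dF_{\alpha}}{dt}=\sum_{i\neq j}\mu_jq_{ji}\,D_{\Phi}(\rho_j,\rho_i),
\]
in which every weight $\mu_jq_{ji}$ is nonnegative. I regard establishing this identity, and in particular the cancellation of the linear/potential terms, as the technical heart of the argument: a naive edge-by-edge application of the lemmas in the style of Theorem \ref{lowerfd} fails here, since the leftover linear terms do \emph{not} cancel unless they are first absorbed into the Bregman form.

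Next I would use the exact scaling $D_{\Phi}(\rho_j,\rho_i)=\rho_i^{\alpha}\Phi(\rho_j/\rho_i)$, together with the uniform bound $0<\rho_i\le 1/r$ for all $t$, where $r:=\min_k\mu_k>0$ (this holds because $p_i\le1$ and $\mu_i\ge r$). For $\alpha<2$, the second-order remainder underlying Lemma \ref{MP} gives $D_{\Phi}(\rho_j,\rho_i)=\tfrac12\Phi''(\xi)(\rho_j-\rho_i)^2$ for some $\xi$ between $\rho_i$ and $\rho_j$; since $\xi\le1/r$ and $\Phi''(t)=t^{\alpha-2}$ is decreasing, this is $\ge\tfrac12 r^{2-\alpha}(\rho_j-\rho_i)^2$. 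For $\alpha\ge2$, Lemma \ref{Leindler} applied to $z=\rho_j/\rho_i-1$ gives $\Phi(\rho_j/\rho_i)\ge u_{\alpha}|\rho_j/\rho_i-1|^{\alpha}$ and hence $D_{\Phi}(\rho_j,\rho_i)\ge u_{\alpha}|\rho_j-\rho_i|^{\alpha}$. The subtlety worth stressing is that the ratio $\rho_j/\rho_i$ is \emph{not} bounded in time (it blows up as $p_i\to0$), so Lemma \ref{MP} cannot be applied to it with a fixed $\beta$; the uniform bound $\rho_i\le1/r$ is exactly what rescues the quadratic estimate when $\alpha<2$, whereas Lemma \ref{Leindler} holds for all arguments and so needs no such control when $\alpha\ge2$.

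Finally I would convert the edge sums into the stated flux sums. Using $\sum_j\mu_jq_{ji}=0$ one checks the identity $\sum_{j\neq i}\mu_jq_{ji}(\rho_j-\rho_i)=\sum_j p_jq_{ji}=\sum_{j\neq i}(p_jq_{ji}-p_iq_{ij})$, the net flux through state $i$. Writing $W_i=\sum_{j\neq i}\mu_jq_{ji}\le(N-1)M$ with $M=\max_{i\neq j}q_{ij}$, and applying Jensen's inequality with the probability weights $\mu_jq_{ji}/W_i$ (convexity of $t\mapsto t^2$ when $\alpha<2$, of $t\mapsto|t|^{\alpha}$ when $\alpha\ge2$), I obtain $\sum_{j\neq i}\mu_jq_{ji}|\rho_j-\rho_i|^{\gamma}\ge W_i^{1-\gamma}\big|\sum_{j\neq i}(p_jq_{ji}-p_iq_{ij})\big|^{\gamma}$ for $\gamma\in\{2,\alpha\}$. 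Summing over $i$, bounding $W_i\le(N-1)M<NM$, and collecting constants gives the claim with $c_{\alpha}=\tfrac{r^{2-\alpha}}{2NM}$ for $\alpha<2$ and $c_{\alpha}=u_{\alpha}/(NM)^{\alpha-1}$ for $\alpha\ge2$; as a consistency check, $c_{\alpha}\to r/(2NM)$ as $\alpha\to1$, recovering the constant of Theorem \ref{lowerfd}.
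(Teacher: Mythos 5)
Your proof is correct and takes essentially the same route as the paper: your Bregman identity $-\tfrac{dF_\alpha}{dt}=\sum_{i\neq j}\mu_jq_{ji}\,D_{\Phi}(\rho_j,\rho_i)$ is exactly the paper's identity \eqref{temp} after relabeling $i\leftrightarrow j$, the two cases are handled by the same Lemmas \ref{MP} and \ref{Leindler} (for $\alpha<2$ you invoke the Taylor-remainder argument underlying Lemma \ref{MP} rather than the lemma itself), and your weighted Jensen step plays the role of the paper's Cauchy--Schwarz/H\"older step. The only substantive variation is that for $\alpha<2$ you bound $\Phi''$ via the uniform estimate $\rho_i\leq 1/r$ instead of applying Lemma \ref{MP} with the time-dependent $\beta=1/(\mu_ip_j)-1$, which gives the slightly sharper constant $r^{2-\alpha}$ in place of the paper's $r^{4-2\alpha}$.
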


\begin{proof}
Let $y_i = p_i/\mu_i$. It follows from the master equation \eqref{master} that
\begin{equation*}
\begin{split}
-\frac{dF_{\alpha}}{dt} &= -\frac{1}{\alpha-1}\sum_i\mu_iy_i^{\alpha-1}\frac{dy_i}{dt}
= -\frac{1}{\alpha-1}\sum_iy_i^{\alpha-1}\frac{dp_i}{dt} \\
&= \frac{1}{\alpha-1}\sum_iy_i^{\alpha-1}\sum_{j\neq i}(p_iq_{ij}-p_jq_{ji})
= \frac{1}{\alpha-1}\sum_{i,j\neq i}p_iq_{ij}(y_i^{\alpha-1}-y_j^{\alpha-1}) \\
&= \frac{1}{\alpha-1}\sum_{i,j\neq i}\mu_iq_{ij}(y_i^\alpha-y_iy_j^{\alpha-1})+\frac{1}{\alpha}\sum_{i,j}\mu_iq_{ij}(y_j^\alpha-y_i^\alpha) \\
&= \frac{1}{\alpha(\alpha-1)}\sum_{i,j\neq i}\mu_iq_{ij}[y_i^\alpha-\alpha y_iy_j^{\alpha-1}+(\alpha-1)y_j^\alpha].
\end{split}
\end{equation*}
Let $z_{ij} = y_i/y_j-1$. It is easy to see that
\begin{equation}\label{temp}
-\frac{dF_{\alpha}}{dt} = \frac{1}{\alpha(\alpha-1)}\sum_{i,j\neq i}\mu_iq_{ij}y_j^\alpha[(z_{ij}+1)^\alpha-\alpha z_{ij}-1].
\end{equation}
Let $\beta = 1/\mu_ip_j-1\geq 0$. Then $z_{ij}\in[-1,\beta]$ and it follows from \eqref{temp} and Lemma \ref{MP} that for any $\alpha<2$ and $\alpha\neq 0,1$,
\begin{equation*}
\begin{split}
-\frac{dF_{\alpha}}{dt} &\geq \frac{1}{2}\sum_{i,j\neq i}\mu_iq_{ij}y_j^\alpha(\mu_ip_j)^{2-\alpha}z_{ij}^2
= \frac{1}{2}\sum_{i,j\neq i}\mu_iq_{ij}y_j^{\alpha-2}(\mu_ip_j)^{2-\alpha}(y_i-y_j)^2 \\
&= \frac{1}{2}\sum_{i,j}\mu_iq_{ij}(\mu_i\mu_j)^{2-\alpha}(y_i-y_j)^2.
\end{split}
\end{equation*}
Recall that the components of $\mu$ are all positive. Set $r = \min\{\mu_1,\mu_2\cdots,\mu_N\}>0$ and $M = \max\{q_{ij}:i\neq j\}>0$. It thus follows from the Cauchy-Schwarz inequality that
\begin{equation*}
\begin{split}
-\frac{dF_{\alpha}}{dt} &\geq \frac{r^{4-2\alpha}}{2}\sum_{i,j}\mu_iq_{ij}(y_i-y_j)^2
\geq \frac{r^{4-2\alpha}}{2}\sum_{i,j}\mu_iq_{ij}\frac{\mu_iq_{ij}}{M}(y_i-y_j)^2 \\
&\geq \frac{r^{4-2\alpha}}{2MN}\sum_j\left[\sum_i\mu_iq_{ij}(y_i-y_j)\right]^2
= \frac{r^{4-2\alpha}}{2MN}\sum_j\left[\sum_ip_iq_{ij}\right]^2 \\
&= \frac{r^{4-2\alpha}}{2MN}\sum_j\left[\sum_{i\neq j}(p_iq_{ij}-p_jq_{ji})\right]^2.
\end{split}
\end{equation*}
This shows that the theorem holds when $\alpha<2$. On the other hand, it follows from \eqref{temp} and Lemma \ref{Leindler} that for any $\alpha\geq 2$,
\begin{equation*}
\begin{split}
-\frac{dF_{\alpha}}{dt} &\geq u_\alpha\sum_{i,j\neq i}\mu_iq_{ij}y_j^\alpha|z_{ij}|^\alpha
= u_\alpha\sum_{i,j}\mu_iq_{ij}|y_i-y_j|^\alpha \\
&\geq u_\alpha\sum_{i,j}\mu_iq_{ij}\left(\frac{\mu_iq_{ij}}{M}\right)^{\alpha-1}|y_i-y_j|^\alpha \\
&= \frac{u_\alpha}{M^{\alpha-1}}\sum_{i,j}(\mu_iq_{ij}|y_i-y_j|)^\alpha.
\end{split}
\end{equation*}
It thus follows from the H\"{o}lder inequality that
\begin{equation*}
\begin{split}
-\frac{dF_{\alpha}}{dt} &\geq \frac{u_\alpha}{(MN)^{\alpha-1}}\sum_j\left[\sum_i\mu_iq_{ij}|y_i-y_j|\right]^\alpha
\geq \frac{u_\alpha}{(MN)^{\alpha-1}}\sum_j\left|\sum_i\mu_iq_{ij}(y_i-y_j)\right|^\alpha \\
&= \frac{u_\alpha}{(MN)^{\alpha-1}}\sum_j\left|\sum_ip_iq_{ij}\right|^\alpha
= \frac{u_\alpha}{(MN)^{\alpha-1}}\sum_j\left|\sum_{i\neq j}(p_iq_{ij}-p_jq_{ji})\right|^\alpha.
\end{split}
\end{equation*}
This shows that the theorem also holds when $\alpha\geq 2$.
\end{proof}

We conclude this section with the following observation.

\begin{remark}
It is interesting to note that by taking $\alpha\rightarrow 1$ in Theorem \ref{lowerfalpha}, we obtain that
\begin{equation*}
F = \lim_{\alpha\rightarrow 1}F_\alpha \geq \frac{1}{2}\sum_i(p_i-\mu_i)^2.
\end{equation*}
This is exactly the inequality stated in Theorem \ref{lowerF} which gives the nonzero lower bound for the free energy $F$. Similarly, by taking $\alpha\rightarrow 1$ formally in Theorem \ref{lowerfdalpha}, we can recover the nonzero lower bound for the free energy dissipation rate $f_d$ proved in Theorem \ref{lowerfd}.
\end{remark}

\section{Physical significance of the nonzero lower bounds}
In this section, we shall present some physical interpretations of the nonzero lower bounds obtained in the previous section. It is well-known that the entropy production rate of a system is always nonnegative and this is actually the second law of thermodynamics, which characterizes the irreversibility of nonequilibrium processes. Recently, it has been shown in \cite{ge2010physical, esposito2010three} that the irreversibility has two different mechanisms: the deviation from steady state and the breaking of detailed balance. The former is quantified by the non-adiabatic entropy production rate $e_p^{(na)}$, while the latter by the adiabatic entropy production rate $e_p^{(ad)}$. Therefore, the decomposition of the total entropy production rate $e_p = e_p^{(na)}+e_p^{(ad)}$ provides a deep insight into the irreversibility. This important point of view is by no means obvious without the aid of the nonzero lower bounds obtained above.

For a long time, it is known that the entropy production rate $e_p$ of a system can generally be written as a bilinear form of the thermodynamic fluxes and forces \cite{de1962non}. For the master equation, we have
\begin{equation*}
e_p = \sum_{i,j}J_{ij}X_{ij} = \sum_{i<j}(p_iq_{ij}-p_jq_{ji})\log\frac{p_iq_{ij}}{p_jq_{ji}},
\end{equation*}
where $J_{ij} = p_iq_{ij}-p_jq_{ji}$ is the local thermodynamic flux between states $i$ and $j$ and
\begin{equation*}
X_{ij} = \log\frac{p_iq_{ij}}{p_jq_{ji}}
\end{equation*}
is the corresponding local thermodynamic force. In fact, the non-adiabatic part $e_p^{(na)}$ can be written in a similar way as
\begin{equation*}
e_p^{(na)} = -\frac{dF}{dt} = -\sum_i\frac{\partial F}{\partial p_i}\frac{dp_i}{dt} = \sum_iJ_iX_i,
\end{equation*}
where $J_i = dp_i/dt = \sum_{j\neq i}(p_jq_{ji}-p_iq_{ij})$ is the total thermodynamic flux through state $i$ and
\begin{equation*}
X_i = -\frac{\partial F}{\partial p_i} = -\left[\log\frac{p_i}{\mu_i}+1\right]
\end{equation*}
is the corresponding total thermodynamic force. Thus both the entropy production rate $e_p$ and its non-adiabatic part $e_p^{(na)}$ can be cast into a similar bilinear form. The former concerns with local thermodynamic fluxes and forces and the latter only deals with the total ones.

With Theorems \ref{lowerep} and \ref{lowerfd}, we show that there exist constants $c_1,c_4>0$ only depending on the transition rate matrix $Q$ such that
\begin{equation*}
e_p = \sum_{i,j}J_{ij}X_{ij} \geq c_4\sum_{i,j}J_{ij}^2
\end{equation*}
and
\begin{equation*}
e_p^{(na)} = \sum_iJ_iX_i \geq c_1\sum_iJ_i^2.
\end{equation*}
These inequalities reveal two strong dissipative properties for the master equation. They suggest that a bilinear form of the thermodynamic fluxes and forces generally has a nonzero lower bound which is proportional to the sum of squares of the thermodynamic fluxes. We believe that this property is possessed by quite general thermodynamic systems \cite{yong2008interesting}, not necessarily modeled by the master equation.

Recall that a central task of modern nonequilibrium thermodynamics is to provide guiding principles for  modeling various irreversible processes, where the second law of thermodynamics plays a fundamental role. For example, classical irreversible thermodynamics (CIT), rational extended thermodynamics (RET), and extended irreversible thermodynamics (EIT) are all such theories that have been developed and applied with great success for this task \cite{de1962non, muller1998rational, jou2010extended}. Therefore, the strong dissipative properties obtained above may provide such a guiding principle for the construction of thermodynamic constitutive relations.

Let $J = (J_k)$ and $X = (X_k)$ denote two column vectors composed of all thermodynamic fluxes and forces of a molecular system, respectively. In nonequilibrium thermodynamics, we hope to find a constitute relation $J = J(X)$ between the thermodynamic fluxes and forces. In this way, we can obtain a closed and solvable mathematical model together with balance equations \cite{zhu2014conservation}. In the region not far from equilibrium, the theory of CIT claims a linear constitutive relation: $J(X) = MX$, where $M$ is a constant matrix which should be nonnegative definite in order to guarantee $e_p\geq 0$. The celebrated Onsager's reciprocal relation further requires $M$ to be symmetric \cite{reichl1980modern}.

However, when the system is far from equilibrium, the matrix $M$ cannot be viewed as constant. A possible modification is the nonlinear constitutive relation of $J(X) = M(X)X$, where $M(X)$ is a nonnegative definite matrix depending on $X$. This kind of relations is allowed by the second law of thermodynamics. In practice, we are often concerned with the asymptotic behavior of $J(X)$ as $X$ tends to zero or infinity. To this end, we assume that $J(X)$ behaves like $|X|^\alpha$ as $X$ tends to zero or infinity, namely, there exist constants $\gamma_1>0$ and $\gamma_2>0$ such that
\begin{equation}\label{elliptic}
\gamma_1|X|^\alpha \leq |J(X)| \leq \gamma_2|X|^\alpha.
\end{equation}

Remarkably, the nonzero lower bounds obtained above impose additional constraints on the exponent $\alpha$. In fact, it is easy to see from \eqref{elliptic} that
\begin{equation}\label{right}
e_p = J\cdot X = J(X)\cdot X \leq |X|\cdot|J(X)| \leq \gamma_2|X|^{\alpha+1}.
\end{equation}
On the other hand, if the entropy production rate $e_p$ has a nonzero lower bound as in Theorem \ref{lowerfd}, it is easy to check that
\begin{equation}\label{left}
e_p \geq c_4|J|^2 = c_4|J(X)|^2 \geq c_4\gamma_1^2|X|^{2\alpha}.
\end{equation}
Combining \eqref{right} and \eqref{left}, we obtain that
\begin{equation}\label{left}
|X|^{\alpha-1} \leq \frac{\gamma_2}{c_4\gamma_1^2}.
\end{equation}
This implies that $\alpha$ must satisfy $\alpha\geq 1$ as $X\rightarrow 0$ and $\alpha\leq 1$ as $X\rightarrow\infty$. Therefore, the nonzero lower bounds obtained above indeed provide guidance to the construction of thermodynamic constitutive relations and such guidance does not seem to be reported in the literature before.

\section{A mathematical application of the nonzero lower bounds}
The nonzero lower bounds obtained in this paper not only provide physical insights into the dissipative properties of nonequilibrium processes, but also imply some mathematical consequences such as the long-time dynamics of the master equation.

It is a classical result that the probability distribution $p(t)$ of an irreducible Markov jump process will converge to the steady-state distribution $\mu$ as $t\rightarrow\infty$. This fact can be proved either by the algebraic method based on the Perron-Frobenius theorem \cite{berman1979nonnegative} or by the probabilistic method based on the coupling of Markov chains \cite{norris1998markov}. However, both the two methods turn out to be rather involved. In this section, we shall give a simple and purely analytic proof of this classical result by using the nonzero lower bounds obtained in this paper.

To this end, we recall the following elementary fact which will be proved for completeness.
\begin{lemma}\label{long}
Let $f$ be a Lipschitz continuous and integrable function on $[0,\infty)$. Then we have 
\begin{equation}
\lim_{t\rightarrow\infty}f(t) = 0.
\end{equation}
\end{lemma}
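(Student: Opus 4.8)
The plan is to prove the contrapositive-flavored statement directly: since $f$ is integrable on $[0,\infty)$, the integral $\int_0^\infty f(s)\,ds$ converges, and I want to upgrade this to pointwise decay using the Lipschitz regularity. The natural strategy is to argue by contradiction. Suppose $f(t)\not\to 0$; then there exist $\varepsilon_0>0$ and a sequence of times $t_n\to\infty$ with $|f(t_n)|\geq\varepsilon_0$ for all $n$. The idea is that the Lipschitz condition prevents $f$ from dropping back toward zero too quickly, so each spike at $t_n$ forces a fixed minimum amount of ``mass'' in the integral over a neighborhood of $t_n$, and infinitely many disjoint such neighborhoods would make $\int_0^\infty|f|$ diverge.

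Concretely, let $L$ denote the Lipschitz constant of $f$. First I would note that for $s$ near $t_n$, say $|s-t_n|\leq\delta$, we have $|f(s)|\geq|f(t_n)|-L|s-t_n|\geq\varepsilon_0-L\delta$. Choosing $\delta=\varepsilon_0/(2L)$ guarantees $|f(s)|\geq\varepsilon_0/2$ on the whole interval $[t_n-\delta,t_n+\delta]$. Since $f$ does not change sign on this interval (it stays bounded away from zero), the contribution to $\int|f|$ over each such interval is at least $2\delta\cdot(\varepsilon_0/2)=\varepsilon_0^2/(2L)$, a fixed positive constant independent of $n$. By passing to a subsequence of the $t_n$ I can ensure these intervals are pairwise disjoint (successive $t_n$ can be chosen more than $2\delta$ apart, since $t_n\to\infty$). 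Summing over the infinitely many disjoint intervals shows $\int_0^\infty|f(s)|\,ds=\infty$, contradicting the integrability of $f$. Hence $f(t)\to 0$.

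The main obstacle, though a minor one, is the bookkeeping needed to extract a subsequence whose $\delta$-neighborhoods are genuinely disjoint and to handle the left endpoint near $t=0$; this is routine once $\delta$ is fixed. A secondary point worth stating carefully is that integrability here should be read as $\int_0^\infty|f|<\infty$ (absolute integrability), which is how the argument uses it; if only conditional convergence of $\int_0^\infty f$ were assumed, the claim could fail, so I would make the absolute integrability explicit. The whole proof is elementary and self-contained, requiring only the Lipschitz bound to convert a single large value into a uniform lower bound on a surrounding interval.
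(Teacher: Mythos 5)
Your proof is correct and takes essentially the same route as the paper's: argue by contradiction with a sequence of spike times $t_n\to\infty$, use the Lipschitz constant to propagate the lower bound $|f|\geq \varepsilon_0/2$ over a fixed-length interval around each $t_n$, and sum over disjoint such intervals to contradict $\int_0^\infty|f|<\infty$. The only cosmetic differences are that you use two-sided neighborhoods where the paper uses one-sided ones, and you explicitly flag (correctly) that integrability must mean absolute integrability.
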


\begin{proof}
Since $f$ is Lipschitz continuous, there exists a constant $K>0$ such that $|f(x)-f(y)|<K|x-y|$ for any $x,y\geq 0$. Assume that $f(t)$ does not converge to $0$ as $t\rightarrow\infty$. Then there exist $0<\epsilon<K$ and a sequence $t_n\rightarrow\infty$ such that $t_{n+1}-t_n>1$ and $|f(t_n)|>2\epsilon$ for any $n\geq 1$. For any $n\geq 1$ and $h<\epsilon/K$, it is easy to see that
\begin{equation*}
|f(t_n+h)-f(t_n)| \leq Kh < \epsilon.
\end{equation*}
This shows that $|f(t_n+h)|>\epsilon$ for any $n\geq 1$ and $h<\epsilon/K$. Thus we obtain that
\begin{equation*}
\int_0^\infty|f(t)|dt \geq \sum_{n=1}^\infty\int_{t_n}^{t_n+\epsilon/K}|f(t)|dt \geq \sum_{n=1}^\infty\frac{\epsilon^2}{K} = \infty.
\end{equation*}
This contradicts the integrability of $f$ on $[0,\infty)$ and hence we obtain the desired result.
\end{proof}

We are now in a position to study the long-time dynamics of Markov jump processes based on the nonzero lower bound for the free energy dissipation rate.
\begin{theorem}
Assume that the system is irreducible. Then we have 
\begin{equation*}
\lim_{t\rightarrow\infty}p(t) = \mu.
\end{equation*}
\end{theorem}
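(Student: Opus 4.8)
The plan is to use the free energy $F = D(p\parallel\mu)$ as a Lyapunov function and feed its dissipation rate into the elementary Lemma \ref{long}. The key observation is that everything we need has already been assembled: $F$ is nonnegative (indeed bounded below by $\tfrac12\sum_i(p_i-\mu_i)^2$ by Theorem \ref{lowerF}), its time derivative is $dF/dt = -f_d = -e_p^{(na)} \leq 0$, and Theorem \ref{lowerfd} gives a quantitative lower bound on $f_d$ in terms of the total fluxes. So the strategy is: establish that $f_d$ is integrable in time, conclude via Lemma \ref{long} that $f_d(t)\to 0$, and then use the lower bound of Theorem \ref{lowerfd} to transfer this to convergence of $p(t)$ to $\mu$.

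First I would integrate the energy balance. Since $dF/dt = -f_d \leq 0$, the free energy $F(t)$ is nonincreasing and bounded below by $0$, hence converges to a limit $F_\infty \geq 0$ as $t\to\infty$. Integrating gives
\begin{equation*}
\int_0^\infty f_d(t)\,dt = F(0) - F_\infty < \infty,
\end{equation*}
using that $F(0)$ is finite (the components of $p(0)$ and $\mu$ are positive, or one invokes that all components of $p(t)$ become positive after any arbitrarily small time). Thus $f_d$ is integrable on $[0,\infty)$.

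Next I would verify that $f_d(t)$ is Lipschitz continuous in $t$, so that Lemma \ref{long} applies and yields $\lim_{t\to\infty} f_d(t) = 0$. This is where the main technical care is needed: one must check that $f_d(t) = e_p^{(na)}(t)$ has a bounded time derivative. Since the components of $p(t)$ stay in a compact subset of the positive simplex for $t$ bounded away from $0$ (they are bounded below by a positive constant $r$ as in the proof of Corollary \ref{lowerepad2}, and the $\mu_i$ are positive constants), the expression $e_p^{(na)} = \sum_{i,j}p_iq_{ij}\log\tfrac{p_i\mu_j}{p_j\mu_i}$ is a smooth function of $p$ on this compact set, and $dp/dt = pQ$ is bounded. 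Hence $f_d$ is $C^1$ with bounded derivative, so Lipschitz. The only subtlety is the initial instant, which is harmless: one simply applies Lemma \ref{long} on $[\delta,\infty)$ for some small $\delta>0$.

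Finally I would close the argument using Theorem \ref{lowerfd}. Since $f_d(t)\to 0$ and
\begin{equation*}
f_d \geq c_1\sum_i\Big[\sum_{j\neq i}(p_jq_{ji}-p_iq_{ij})\Big]^2 \geq 0,
\end{equation*}
each total flux $J_i(t) = \sum_{j\neq i}(p_jq_{ji}-p_iq_{ij}) = (dp/dt)_i$ tends to $0$. To conclude $p(t)\to\mu$ I would argue by compactness: $p(t)$ lives in the compact simplex, so any subsequence has a convergent sub-subsequence with limit $p^*$; passing to the limit in $J_i(t)\to 0$ shows $p^*Q = 0$, and since the irreducible system has a \emph{unique} steady-state distribution $\mu$, we get $p^* = \mu$. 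As every subsequential limit equals $\mu$, the whole trajectory converges: $\lim_{t\to\infty}p(t) = \mu$. I expect the Lipschitz-continuity verification of $f_d$ to be the main obstacle, since it is the one step requiring genuine (if routine) estimates rather than a direct appeal to an earlier result; the compactness argument at the end is the cleanest way to avoid any additional quantitative machinery.
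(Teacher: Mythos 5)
Your overall strategy coincides with the paper's: integrate the dissipation inequality to obtain integrability in time, invoke Lemma \ref{long}, conclude that the total fluxes vanish, and finish by compactness plus uniqueness of the steady-state distribution. The one substantive difference is \emph{which} function you feed into Lemma \ref{long}, and that is exactly where your argument has a gap. You apply the lemma to $f_d = e_p^{(na)}$ itself, which contains the logarithms $\log\frac{p_i\mu_j}{p_j\mu_i}$, so its Lipschitz continuity requires the components of $p(t)$ to be uniformly bounded away from zero on $[\delta,\infty)$. To justify this you cite the argument in the proof of Corollary \ref{lowerepad2}; but that argument derives $r=\min_i\min_{t\geq 0}p_i(t)>0$ precisely from the fact that $p(t)\to\mu$ with $\mu$ positive --- the very statement you are trying to prove. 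As written, the step is circular. The needed fact is true and admits a non-circular proof (for instance, $p_i(t+\delta)=\sum_j p_j(t)P_{ji}(\delta)\geq \min_{j,k}P_{jk}(\delta)>0$, where $P(\delta)=e^{\delta Q}$ has strictly positive entries by irreducibility), so the gap is repairable, but it must be repaired.

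The paper sidesteps the issue entirely by applying Lemma \ref{long} not to $f_d$ but to its quadratic lower bound $f(t)=\sum_i\bigl[\sum_j p_j(t)q_{ji}\bigr]^2$ from Theorem \ref{lowerfd}: this is a polynomial in $p$, hence automatically Lipschitz on the compact simplex with no positivity requirement, and its integrability follows from $c_1\int_0^t f(s)\,ds\leq F(0)-F(t)\leq F(0)$, exactly parallel to your first step. Everything afterwards (fluxes tend to zero, subsequential limits are stationary, uniqueness forces the limit to be $\mu$) matches what you wrote. A minor further remark: $F(0)$ is finite even when some $p_i(0)=0$, since $0\log 0=0$ and all $\mu_i>0$, so no positivity assumption on $p(0)$ is needed for the integration step.
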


\begin{proof}
Define a function $f=f(t)$ for $t\in[0,\infty)$ as
\begin{equation*}
f(t) = \sum_i\left[\sum_{j\neq i}(p_j(t)q_{ji}-p_i(t)q_{ij})\right]^2 = \sum_i\left[\sum_jp_j(t)q_{ji}\right]^2.
\end{equation*}
Integrating the inequality in Theorem \ref{lowerfd} and noting that $F(t)\geq 0$, we obtain that
\begin{equation*}
c_1\int_0^tf(s)ds \leq F(0)-F(t)\leq F(0).
\end{equation*}
This indicates that $f$ is integrable on $[0, \infty)$.

Moreover, set $M = \max\{|q_{ij}|:1\leq i,j\leq N\}$. It follows from the master equation \eqref{master} that for any $i$,
\begin{equation}\label{temp1}
|\dot{p_i}(t)| \leq \sum_j|p_j(t)q_{ij}| \leq MN.
\end{equation}
Direct computation shows that
\begin{equation}\label{temp2}
\dot{f}(t) = 2\sum_i\left[\sum_jp_j(t)q_{ji}\right]\left[\sum_j\dot{p}_j(t)q_{ji}\right].
\end{equation}
From \eqref{temp1} and \eqref{temp2}, it is easy to see that $|\dot{f}(t)|$ is uniformly bounded and thus $f$ is Lipschitz continuous on $[0, \infty)$. It thus follows from Theorem \ref{long} that
\begin{equation}\label{zero}
\lim_{t\rightarrow\infty}f(t) = 0.
\end{equation}

Assume that $p(t)$ does not converge to $\mu$ as $t\rightarrow\infty$. Since $p(t)$ is uniformly bounded, there exist a state $\mu'$ and a sequence $t_n\rightarrow\infty$ such that
\begin{equation*}
\lim_{n\rightarrow\infty}p(t_n)= \mu' \neq \mu.
\end{equation*}
Owing to the uniqueness of the steady-state distribution $\mu$, it is obvious that
\begin{equation*}
r := \sum_i\left[\sum_j\mu'_jq_{ji}\right]^2 > 0.
\end{equation*}
Thus when $n$ is sufficiently large, we have
\begin{equation*}
f(t_n) = \sum_i\left[\sum_jp_j(t_n)q_{ji}\right]^2 \geq \frac{r}{2}.
\end{equation*}
This contradicts \eqref{zero} and hence we obtain the desired result.
\end{proof}

\section*{Acknowledgment}
This work is supported by the National Natural Science Foundation of China (Grants 11204150 and 11471185) and the Initiative Scientific Research Program of Tsinghua University (Grant 20151080424).

\setlength{\bibsep}{5pt}
\small\bibliographystyle{pnas2009}
\bibliography{master}

\end{document}